\documentclass[a4paper,12pt,reqno]{amsart}

\usepackage{a4}
\usepackage{verbatim}
\usepackage{comment}
\usepackage{color}
\usepackage{todonotes}
\usepackage{doi}

\usepackage{amssymb,stmaryrd}
\usepackage{color}
\usepackage{graphicx}
\usepackage{floatflt}
\usepackage{float} 		

\usepackage{amsthm}
\usepackage{amscd}
\usepackage{hyperref}

\newtheorem{definition}{Definition}

\newtheorem{example}[definition]{Example}

\newtheorem{theorem}[definition]{Theorem}

\newtheorem*{theorem*}{Theorem}
\newtheorem{lemma}[definition]{Lemma}
\newtheorem*{lemma*}{Lemma}
\newtheorem{remark}[definition]{Remark}

\newtheorem{corollary}[definition]{Corollary}
\newtheorem*{corollary*}{Corollary}

\newtheorem{proposition}[definition]{Proposition}

\def\XXint#1#2#3{{\setbox0=\hbox{$#1{#2#3}{\int}$}
		\vcenter{\hbox{$#2#3$}}\kern-.5\wd0}}

\makeatletter
\@addtoreset{definition}{section}
\@addtoreset{equation}{section}
\makeatother

\newcommand{\I}{\mathcal{I}}

\DeclareMathOperator{\Res}{Res}

\allowdisplaybreaks[4]

\title{Universal Correlators on Exponentially Ramified Spectral Curves}

\author{Mohamad Alameddine}

\address{Section of Mathematics, University of Geneva, Rue du Conseil-G\'{e}n\'{e}ral 7-9,
	1205 Geneva, Switzerland}

\email{mohamad.alameddine@unige.ch}

\author{Alexander Hock}

\address{Section of Mathematics, University of Geneva, Rue du Conseil-G\'{e}n\'{e}ral 7-9,
	1205 Geneva, Switzerland} 
\email{alexander.hock@unige.ch}

\begin{document}
\maketitle

\begin{abstract}
We investigate generalized topological recursion on compact spectral curves admitting exponential, and more generally essential, singularities as ramification points. Exploiting the global formulation of generalized topological recursion, we establish a contour deformation of the recursive residue formula that replaces contributions from these essential singularities by residues at meromorphic points. This provides a natural recursive framework for exponentially ramified spectral curves while remaining entirely within the generalized topological recursion formalism. Our formalism can also be viewed as a limiting case of the Bouchard–Eynard higher-order topological recursion, obtained when the order of a ramification point tends to infinity in a convergent manner, as occurs, for example, for exponential singularities while $dy$ remains regular and non-vanishing. We further illustrate the resulting formalism through several examples, including transcendental functions and the $x$-$y$ dual of the Mirzakhani curve.
\end{abstract}

\section{Introduction}

\subsection{History and motivation}

Topological recursion (TR) is a universal formalism that defines a family of multi-differentials $\omega_{g,n}$ on a given spectral curve. While its original formulation was motivated by loop equations coming from matrix models \cite{Eynard:2007kz}, it admits applications in various fields of mathematics including Gromov-Witten theory \cite{Bouchard:2007ys,Eynard:2012nj,Norbury2014,Fang2019}, intersection theory on moduli spaces of curves \cite{Eynard:2007kz,Eynard:2011ga,DuninBarkowski2014}, Hurwitz theory \cite{Bouchard2008,Eynard2011Hurwitz,Dunin-Barkowski:2019iog}, and integrable systems \cite{Eynard:2021sxg,Iwaki:2019zeq,KPandTaufunctions,Marchal:2019bia} just to mention a few.

The spectral curve for the initial data of TR is taken to be a Riemann surface $\Sigma$ along with two (usually) meromorphic functions $x$ and $y$ and a symmetric bi-differential denoted $B$. The original TR was defined near the ramified points of $x$ and required them to be simple and regular points for the function $y$ and non-vanishing for $dy$. The quest for lifting these conditions has generated a gradually growing amount of research leading to different generalizations of TR \cite{Bouchard:2012yg,Do:2014ncn,Chekhov2019,Alexandrov:2023tgl,Alexandrov:2024tjo}. The main protagonist of this article is the \textit{Degenerate and irregular TR} of \cite{Alexandrov:2024tjo} (referred to as generalized TR or in short Gen-TR).


As mentioned above, the original setting of topological recursion consists of simple ramification points. The recursive definition of $\omega_{g,n}$ depends crucially on the fact that the ramification points are simple. A definition including higher-order ramification points, say of order $r_a$ for a ramification point $a$, where $r_a=2$ corresponds to a simple ramification point, was studied in Bouchard–Eynard topological recursion (BE-TR) \cite{Bouchard:2012yg}. The main property was that their definition commutes with taking limits in which colliding ramification points generate a higher-order ramification point.
The construction used in \textit{op. cit.} involves a kind of globalization of the recursion which, however, still depends strongly on the order $r_a$ of the ramification point through the local involutions associated with $r_a$. Another important assumption is that $y$ is regular at the ramification point and that $dy$ is non-vanishing there. In this set-up, the multi-differentials $\omega_{g,n}$ generated by BE-TR admit poles at a ramification point $a$ of order at most $r_a(2g-2+n)+2g$.
From this simple observation, one immediately concludes that, in a limit $x_N\to x$ in which a ramification point acquires infinite order, the multi-differentials can no longer remain meromorphic. In other words, if one requires $y$ to be regular at an infinite-order ramification point and $dy$ to be non-vanishing, then all $\omega_{g,n}$ will develop poles of infinite order, or rather essential singularities. if the limit is convergent.

A first attempt to study ramification points of infinite order was made in \cite{Bouchard:2023yau} under the name of compact trans-algebraic spectral curves. The idea is to consider a sequence of compact meromorphic spectral curves
\[
S_N=\left(\Sigma,x_N,y_N,B\right),
\]
such that $x_N\to x$ and $y_N\to y$ as $N\to\infty$, where $x$ and $y$ are trans-algebraic functions on $\Sigma$ (potentially admitting exponential singularities) with $xy$ meromorphic. The main construction is then captured by the commutation of topological recursion with the infinite limit
\begin{align}
\lim_{N\to\infty}\bigl(\omega_{g,n}^{N}(S_N)\bigr)
=
\omega_{g,n}\!\left(\lim_{N\to\infty}S_N\right),
\end{align}
where $\omega_{g,n}^{N}$ are the TR differentials associated with the initial data $S_N$.
The reason for imposing the condition of $xy$ being meromorphic comes from the enumerative interpretation of these curves in terms of Atlantes Hurwitz numbers. This important structure, however, breaks the crucial assumption that $y$ be regular at the infinite-order ramification point. Loosely speaking, the function $y_N$ admits a pole of the same order as the ramification point of $x_N$. This secretly places us in the setting of \textit{irregular topological recursion} in the sense of Chekhov–Do–Norbury \cite{Do:2014ncn,Chekhov2019} (or even more general), where it was observed that the order of the pole of $\omega_{g,n}$ is of order $2g$, thus independent of $n$ and, in particular, independent of the order of the ramification point itself.
This means that the authors of \cite{Bouchard:2023yau} remain in a setting in which the multi-differentials $\omega_{g,n}$ stay meromorphic in the limit due to the singular behavior of $y$ at the infinite-order ramification point. However, this is not the original setting \cite{Bouchard:2012yg} in which one genuinely changes the order of a ramification point while keeping $y$.

We present here another approach to the problem of exponential or essential singularities. We make no assumption on the product $xy$ and consider the case where one of the two functions admits an exponential or essential singularity, although the case in which both functions admit essential singularities could be treated in a similar way. In the case of an essential singularity in $x$, a typical example is obtained when
\[
x=\lim_{N\to\infty} x_N.
\]
The opposite case, namely an essential singularity in $y$, can be treated equally well through the $x$-$y$ swap formalism \cite{Hock:2022wer,Alexandrov:2022ydc}, which we also consider.
The main result of this work relies on the global nature of the integrand in Gen-TR, which gives rise to globally defined differentials assuming the spectral curve to be compact. Importantly, the recursive definition in Gen-TR is independent of the order of the ramification points. It reduces to BE-TR, whose definition depends on the order of the ramification points, under the assumptions that $y$ is regular and that $dy$ is non-vanishing at those points.

In fact, let us examine the first non-trivial step in the recursive definition of $\omega_{g,n}$, namely $\omega_{0,3}$. It was shown in \cite{Eynard:2007kz} that the local definition of the original topological recursion is equivalent to
\begin{align} \label{Omega03}
    \omega_{0,3}(z_1,z_2,z_3)
    =
    \sum_{p \in \mathcal{P}}
    \Res_{q \to p}
    \frac{B(z_1,q)\,B(z_2,q)\,B(z_3,q)}
         {dx(q)\,dy(q)}.
\end{align}
This formula already appeared, in a related form, in the work of Krichever \cite{Krichever:1992qe}, where it is connected to third derivatives of the genus-zero free energy, in the work of Dubrovin related to Frobenius manifolds \cite{Dubrovin:1994hc}, and also in the work of Bertola \cite{Bertola:2005he}. The expression is symmetric in its three variables and, more importantly, admits a global formulation that is independent of the order of the ramification points, despite of being equivalent to the local definition in topological recursion. Furthermore, the local definition of higher-order BE-TR, which explicitly depends on the orders $r_a$ of the ramification points $a\in\mathcal{P}$, can also be rewritten in the form \eqref{Omega03}.

This already hints at a global recursive definition of topological recursion in which the multi-differentials are defined independently of the order of the ramification points. The set $\mathcal{P}$, introduced later, denotes the collection of points (typically the ramification points of $x$) on which the recursion is established. This motivates a new perspective on topological recursion, in which the assumptions on the initial data $dx$ and $dy$ are relaxed (while keeping them as global meromorphic differentials), leading to an alternative recursive formalism. From another point of view, one may say that Gen-TR \cite{Alexandrov:2024tjo} generalizes the formula \eqref{Omega03} to arbitrary $(g,n)$.

 \subsection{Main ideas and organization}

 Our approach to the problem is based on the formulation of Gen-TR, for this, we will treat a (compact) spectral curve with an essential singularity as a limit of sequences of meromorphic (compact) spectral curves and prove that essential singularities can be covered in this framework through a contour deformation of the globally defined integrand of the recursion. Let us illustrate this idea first on a simple example, in particular, a contour deformation of the residue of the following one form 
\begin{align}
    \underset{ z \to \infty}{\Res} \frac{e^z}{(z-t)^2} d z
\end{align}
Using series expansions of both functions in the numerator and denominator, one could carry out the computation in an explicit way 
\begin{align}
     \underset{ z \to \infty}{\Res} \frac{e^z}{(z-t)^2} d z =& - \underset{ z \to \infty}{\Res} \sum_{m,n \geq 0} \frac{(n+1) t^n}{m !} z^{m-n-2} =  - \sum_{n=0}^\infty \frac{t^n}{n !} = - e^t.
\end{align}
However, one realizes that in order to compute such a residue, one needs the infinite Laurent (series) of the components inside the residue. Another simpler way exists in order to compute such a residue, via a contour deformation. One has 
\begin{align}
    \underset{ z \to \infty}{\Res} \frac{e^z}{(z-t)^2} d z =& - \underset{ z \to t }{\Res} \frac{e^z}{(z-t)^2} dz = - \frac{d}{dz} e^z \bigg\vert_{z=t} = -e^t
\end{align}
One realizes the easy computation when one moves the contour of integration, this example illustrates the idea but does not limit its use. In other words, this idea is not limited to the above example as we will see. Note that one important ingredient allowing for such a deformation is the globally defined integrand, this is a crucial property that is offered by Gen-TR as we will see. An additional example, this time closer to TR and to the discussion above is the explicit universal expression for $\omega_{0,3}$ given in \eqref{Omega03}. Indeed, already on a global curve, the idea of moving the contour can already be used in order to express expression \ref{Omega03} as 
\begin{align}
   \omega_{0,3} (z_1,z_2,z_3) = -\sum_{p \in \{z_1,z_2,z_3, \mathcal{P}^\vee \} }\underset{q \to p}{\Res} \frac{B(z_1,q) B(z_2,q) B(z_3,q)}{d x(q) d y (q)}
\end{align}
assuming that the Bergman kernel $B$ is normalized along some choice of $\mathcal{A}$-cycles. If $x$ has an essential singularity in the set $\mathcal{P}$, this expression is well-defined and can easily be computed by the residues at $\{z_1,z_2,z_3, \mathcal{P}^\vee \}$. The new expression picks up poles at the diagonals and the ramification points of $y$ (which is here the set $\mathcal{P}^\vee$), and this is what we will establish for all the differentials $\omega_{g,n}$. In other words, we will show that the recursion of Gen-TR producing differentials is equivalent to the following recursion  
 \begin{align}
       \omega_{g,n} (z, z_{\llbracket n-1  \rrbracket}) = -  \sum_{q \in \{z,z_i ,\mathcal{P}^\vee \}} \underset{\tilde{z} = q}{\Res}  \int^{\tilde{z} } B(\cdot,z) \,\, \bar{\omega}_{g,n} (\tilde{z}, z_{\llbracket n-1  \rrbracket})
\end{align}
Let us explain briefly the concept behind the sets of points $\mathcal{P}$ and $\mathcal{P}^\vee$. In fact, for the original definition of TR, the set $\mathcal{P}$ consists of the ramification points of $x$ with some constraints on the order of these points following different reformulations. However for Gen-TR, it can more generally be taken as a subset of a set of special points (essentially the set of ramification points of $x$ and $y$ plus some further points, see Def.~\ref{def:specialpoint}), the complement of this subset is denoted $\mathcal{P}^\vee$ and consists of the points on which the $x$-$y$ dual recursion runs. More precisely, the definition of Gen-TR itself includes a set of special points $\mathbb{S}= \{ \mathcal{P} \sqcup \mathcal{P}^\vee \}$ and a choice of the partition of these points into the complementary subsets $\mathcal{P}$ and $\mathcal{P}^\vee$. The recursion itself runs on the set of points $\mathcal{P}$ while the dual one runs on the complement $\mathcal{P}^\vee$. Therefore, the definition of Gen-TR is adapted for the $x$-$y$ duality linking two different families of differentials defined on dual initial data. We will recall this in Sec.~\ref{Sec2} since we will use this in the examples.   

Effectively, one needs a new point of view on TR in order to have globally defined differentials, and in order to be able to treat curves coming from different problems of Hurwitz theory. As mentioned above, we will recall in this article the new point of view on TR introduced in \cite{Alexandrov:2024tjo}, this is the main subject of Sec.~\ref{Sec2}. We will start this section by recalling the definition of the original TR of \cite{Eynard:2007kz} and review how one could lift the assumptions on the initial data leading to Gen-TR. We will then discuss the mechanism of this new recursion which runs on a set of points called key points $\mathcal{P}$ subset of a certain set of special points. The properties of the differentials allow us later to establish a contour deformation of the residue formula. While this deformation will add residual points, the main advantage behind it is that choosing the set of key points $\mathcal{P}$ allows us to avoid certain residues. We will discuss this in detail and provide some preliminary examples to illustrate our point. Perhaps, one way to put our main statement is that Gen-TR naturally accommodates exponential (essential) singularities through contour deformation while remaining compatible with the $x$-$y$ duality. 

This approach is primarily motivated by the question of curves admitting essential singularities, one famous example is the $x$-$y$ dual side of the Mirzakhani curve. We will show how one can consider these examples using our approach when we discuss essential singularities at the end of Sec.~\ref{Sec2}. This will offer an additional computational tool when considering explicit expressions for the differentials. We will then move to Sec.~\ref{Sec3} where we consider several such examples in increasing difficulty. The main aim of this section is to display the various techniques (from logarithmic TR, $x$-$y$ duality and contour deformation) used in order to handle spectral curves with essential singular behavior. We first of all consider a spectral curve whose different interpretations (parameterization) land in two different frameworks, one is the so-called logarithmic TR (an extension of TR to handle logarithmic singularities of spectral curves), and the other is a curve with an essential singularity which we treat using the $x$-$y$ duality. We will show that the choice of Bergman kernel (or better the parameterization) will affect the differentials constructed. We will then consider a transcendental curve related to the Riemann $\zeta$-function, and finally show that our methods extend to include curves such as the Mirzakhani spectral curve related to Weil-Petersson volumes.

The final Sec.~\ref{Sec4} is devoted to open questions and directions in which essential singularities may naturally arise, we will recall in particular the importance of these methods and frameworks when considering the quantization of these curves. In fact, the original TR and its extensions admit a quantization scheme relating it for instance to integrable systems as Lax representations of differential type.

We leave the treatment of specific curves from different branches of Hurwitz theory or Gromov-Witten invariants as an open question that we hope to consider in future works.

\subsection{Notations and reminders}
We will use the following notations
\begin{itemize}
    \item $\llbracket n  \rrbracket = \{ 1,\dots,n\}$ for any $n \geq 1$ and we set by convention $\llbracket 0 \rrbracket = \emptyset$.
    \item The notation $z_I$ is used for $\left( z_i \right)_{i \in I}$ for any finite and non-empty set or multiset $I$.
    \item The operator $[u^d]$ multiplied on the left of a series expansion is the operator that extracts the $d^{\text{th}}$ coefficient in the expansion in the variable $u$, that is $[u^d] \left( \sum_{k=-\infty}^{+\infty} f_k u^k \right) := f_d$.
    \item The function $\mathcal{S}(u)$ is defined as the function $\mathcal{S}(u):=\frac{2\sinh(\frac{u}{2}) }{u}= \frac{e^{\frac{u}{2}} - e^{-\frac{u}{2}} }{u}$.
    \item The notation $\big\vert_{\tilde{z} \to z}$ denoted the substitution operator $\tilde{z} \to z$, that is $\big\vert_{\tilde{z} \to z} f(\tilde{z}) = f(z) $
\end{itemize}
Furthermore, since we are dealing with multi-differentials on the cartesian product of a compact Riemann surface, we need to introduce \textit{Riemann's bilinear identity}: consider a homology basis of the genus $g$ Riemann surface $(\mathcal{A}_i, \mathcal{B}_i)_{i=1}^g$, if $\omega_1$ and $\omega_2$ are two meromorphic forms on the curve, consider the primitive $\Phi_1(p)$ of $\omega_1$ given by 
\begin{align}
    \Phi_1(p) := \int _{p_0}^p \omega_1.
\end{align}
The primitive depends on the choice of the base point $\{p_0\}$. One has the identity
\begin{align}
    \underset{p \to \text{all poles}}{\Res} \,  \Phi_1(p) \omega_2 (p) = \frac{1}{2 \pi i} \sum_{i=1}^g \oint_{\mathcal{A}_i}  \omega_1 \oint_{\mathcal{B}_i}  \omega_2 - \oint_{\mathcal{B}_i}  \omega_1 \oint_{\mathcal{A}_i}  \omega_2.  
 \end{align}

\subsection*{Acknowledgment}
M. A. was partially funded by the SNSF under the grant agreement PZ00P2 223297 and by the DFG -- project ID 551478549.  A.H. was funded by the Swiss National Science Foundation (SNSF) through the
Ambizione project “TRuality: Topological Recursion, Duality and Applications”
under the grant agreement PZ00P2 223297 and partially by the DFG -- project ID 551478549.

\section{The new view on topological recursion} \label{Sec2}
We will review the recent understanding of Alexandrov et al. \cite{Alexandrov:2024tjo} on TR, by providing first the setup of the original TR \cite{Eynard:2007kz} and then arguing the need for a generalized setup ultimately leading to Gen-TR. 

\subsection{Original Definition of TR (EO-TR)}
We start by recalling the original TR proposed by \cite{Eynard:2007kz}, for this, fix $(\Sigma,x,y,B)$ to be the spectral curve data. In one of the most basic settings, there are no assumptions on $\Sigma$ and it can be, for instance, a collection of small discs, both $x$ and $y$ are assumed being meromorphic functions on $\Sigma$\footnote{This is the original setup defined in \cite{Eynard:2007kz}, which however also works if $dx$ and $dy$ are meromorphic, meaning $x,y$ are allowed to admit logarithmic singular points. This is exactly the case in topological string theory with mirror curve as the TR spectral curve \cite{Bouchard:2007ys} computing Gromov-Witten invariants of toric Calabi-Yau threefolds}, all zeroes of $dx$ are simple, and $dy$ is holomorphic and non-vanishing at zeroes of $dx$, and $B$ is a meromorphic bi-differential on $\Sigma^2$ with the only pole along the diagonal with bi-residue 1. In this setup, TR defines the $n$-differentials (multi-differentials) $\omega_{g,n}$ on $\Sigma^n$ by 
\begin{align}
    \omega_{0,1} = y d x, \qquad \omega_{0,2} = B
\end{align}
and for $2g -2 +n \geq 0$ recursively by a two step procedure. Now, let $z$ and $z_j$ be the same local coordinate around a zero point of $dx$. The first step consists in defining the germs of differentials
\begin{align}
    \bar{\omega}_{g,n+1 \vert i} (z, z_{\llbracket n  \rrbracket}) = & \frac{1}{\left( y(z ) - y(\sigma_i(z) ) dx(z) \right)} \bigg(  \omega_{g-1,n+2} (z, \sigma_i(z) , z_{\llbracket n  \rrbracket}) \nonumber  \\\label{EOTRtdom}
     &   \sum_{\substack{g_1 + g_2 = g, \,\,\,  I_1 \bigsqcup I_2 = \llbracket n  \rrbracket  \\ (g_i, \vert I_i \vert  ) \neq (0,0) }}  \omega_{g_1, \vert I_1 \vert +1  } (z, z_{I_1}) \omega_{g_2, \vert I_2 \vert +1 } (\sigma_i(z), z_{I_2})      \bigg)
\end{align}
in the vicinity of the points $q_i$ with $i=1,...,N$ and for the second step consists in taking residues
\begin{align}\label{originalTR}
    \omega_{g,n+1} (z, z_{\llbracket n  \rrbracket}) = \sum_{i=1}^N \underset{ \tilde{z } = q_i }{\Res}  \int^{\tilde{z}} B (. , z) \, \bar{ \omega }_{g,n+1 \vert i} ( \tilde{z}, z_{\llbracket n  \rrbracket}) .
\end{align}
The integration constants in the last formula can be chosen arbitrarily because $\bar{\omega}_{g,n+1 \vert i}$ has trivial residues at the $q_i$. In  the above two-step procedure, the set $\{ q_1,\dots q_N\}$ is the set of zeroes of $d x$ with local coordinate $z$ and $z_j$, and the function $\sigma_i$ is the deck transformation of $x$ near the point $q_i$. We want to emphasize that $\int^{\tilde{z}} B (. , z)$ together with the residue $\underset{ \tilde{z } = q_i }{\Res}$ taken is a projection operator on $\bar{ \omega }_{g,n+1 \vert i}$, which projects on its principle part in the Laurent series of the variable $\tilde{z}$ at the point $q_i$. From this perspective, the differentials $  \bar{\omega}_{g,n+1 \vert i}$ are seen as a local approximation of the TR differentials $\omega_{g,n+1}$ at those points $q_i$. More precisely, we have
\begin{align}
    \omega_{g,n+1} (z, z_{\llbracket n  \rrbracket})=   \bar{\omega}_{g,n+1|i} (z, z_{\llbracket n  \rrbracket}) + \, \text{holomorphic part}, \quad z \to q_i, \,\,\,\forall i =1, \dots,N.
\end{align}
Respectively, the integral transformation then recovers $\omega_{g,n+1}$ as a $1$-form in the first coordinate on $\Sigma$, since $B$ is defined globally on $\Sigma$. All $\omega_{g,n}$ with $2g-2+n>0$ are thus globally meromorphic, having no poles other than at the zeros of $dx$. Moreover, the principal parts of their poles at the points $q_i$ coincide with those of $\bar{\omega}_{g,n+1|i}$. These differentials inherit uniquely the properties of $B$. This means that, for instance, on a compact Riemann surface $\Sigma$ equipped with a choice of $\mathcal{A}$- and $\mathcal{B}$-cycles, the Bergman kernel $B$ can be uniquely fixed by imposing the normalization condition along the $\mathcal{A}$-cycles. In this case, all $\omega_{g,n}$ inherit this property from $B$.

\subsection{Definition of Gen-TR}
Some different generalizations of TR exist, formulated for different reasons, applications, handling different types of singularities or ramification data etc. Clearly, one has in the original TR the summation over the simple zeroes of the form $dx$, and one natural generalization would be to lift this recursion to a higher recursion where the order of the zeroes of $dx$ is not anymore simple. This is the main motivation that leads to a form of TR that makes sense in a much more general setting: it will be sufficient just to assume that $dx$ and $dy$ are arbitrary meromorphic differentials on $\Sigma$. This, of course, provides a different algebraic expression for the $\bar{\omega}_{g,n+1|i}$ since the order of the ramification point is used in its definition, for instance in the local deck transformation $\sigma_i$. 

In \cite{Bouchard:2012yg}, a local higher order definition for $\bar{\omega}_{g,n+1|i}$ was provided depending on the order of the ramification point. We, however, want to use another quite new definition called \textit{Gen-TR}, which is defined by a global integrand $\bar{\omega}_{g,n+1}$ independent of the order of the ramification points, which locally reduces to the $\bar{\omega}_{g,n+1|i}$ of EO-TR in some setup if the ramification points of $x$ are simple, or to BE-TR \cite{Bouchard:2012yg} if the ramification points of $x$ have higher order plus some additional assumptions.  Furthermore, Gen-TR actually allows an additional choice whether or not certain points are chosen to be poles of the $\omega_{g,n}$ or not. This means that the following definition is valid in a much more general setting and does only give the same $\omega_{g,n}$ as EO-TR if we restrict to its set-up. Thus, we have much more freedom. Precise statements will follow after the definition.

As for any recursion, let us start by the initial data needed. 
\begin{definition}\label{def:specialpoint}
    Let $dx$ and $dy$ be two meromorphic differentials on a smooth complex curve $\Sigma$, let $q \in \Sigma$ and $z$ be a local coordinate at this point. Then the local expansion of $dx$ and $dy$ at this point is given by 
    \begin{align}
        dx= a z^{r-1} \left( 1 + O(z) \right) dz, \qquad  dy= b z^{s-1} \left( 1 + O(z) \right) dz, \qquad a,b \neq 0 , \,\,\,\,\, r,s \in \mathbb{Z}.
    \end{align}
    The point $q$ is called non-special if $\begin{cases} 
         r =s=1 \,\,\,\, or \\
          r+ s \leq 0
    \end{cases}$  and special otherwise.
\end{definition}
For example, zeros of $dx$ at which $dy$ is regular and non-vanishing are special points; similarly, zeros of $dy$ at which $dx$ is regular and non-vanishing are special points.

The initial data of Gen-TR is defined as follows
\begin{definition}
    The initial data of Gen-TR is given by the tuple $(\Sigma, dx, dy, B, \mathcal{P} )$ where 
    \begin{itemize}
        \item $\Sigma$ is a smooth complex curve. 
        \item $B$ is a symmetric bi-differential on $\Sigma^2$ with a second order pole on the diagonal with bi-residue $1$ and no other poles.
        \item $dx$ and $dy$ are two arbitrary non-zero meromorphic differentials on $\Sigma$
        \item $\mathcal{P} = \{ q_1,\dots q_N \}$ is an arbitrarily chosen finite subset in the set of special points. The points of $\mathcal{P}$ are called key points. 
    \end{itemize}
\end{definition}
    
One realizes immediately that there are no restrictions imposed on the orders of poles and zeroes of $dx$ and $dy$. Before giving the definition of Gen-TR, it is useful to define for a given collection of meromorphic differentials $  \omega_{g,n}$ such that $  \omega_{0,1} =  yd x$ the formal series
\begin{align} \label{collection}
    \omega_n = \sum_{ \substack{g \geq 0 \\ (g,n) \neq (0,1) }} \hbar^{2g - 2 +n}   \omega_{g,n}. 
\end{align}
Now, assuming that the differentials $\omega_n$ are globally defined meromorphic $n$-differentials (which holds by induction), we consider a new collection of differentials $\mathcal{W}_n(z,u ; z_1,\dots z_n) = \sum_{g=0}^\infty  \hbar^{2g - 1 +n}  \mathcal{W}_{g,n}$ depending on an additional formal parameter $u$ for $n \geq 0$ by the following relations 
\begin{align}
    \mathcal{T} _n (z,u; z_{\llbracket n  \rrbracket}) = &  \sum_{k=1}^\infty  \frac{1}{k !}  \prod_{i=1}^k \left( \bigg\vert_{\tilde{z}_i \to z} u \hbar \mathcal{S}(u \hbar \partial_{\tilde{x}_i}) \frac{1}{d \tilde{x}_i}   \right) \left(  \omega_{k+n} (\tilde{z}_{\llbracket k  \rrbracket}, z_{\llbracket n  \rrbracket})   -  \delta_{n,0} \delta_{k,2} \frac{ d \tilde{x}_1  d \tilde{x}_2   }{\left( \tilde{x}_1 - \tilde{x}_2 \right)^2} \right) \nonumber \\\label{TDefinition}
    &  +\delta_{n,0} u \left(\mathcal{S} (u \hbar \partial_x) -1 \right) y,        \\\label{WDefinition}
    \mathcal{W} _n (z,u; z_{\llbracket n  \rrbracket}) =  & \frac{dx}{u \hbar} e^{\mathcal{T}_0 (z,u)} \sum_{\substack{\llbracket n  \rrbracket = \bigsqcup_\alpha J_\alpha  \\ J_\alpha \neq 0    }  } \prod_\alpha   \mathcal{T}_{\vert J_\alpha \vert} (z,u;z_{ J_\alpha })       
\end{align}
where $\tilde{x}_i = x (\tilde{z}_i), x=x(z)$ and $y=y(z)$. From the definition itself, one can prove the following set of properties satisfied by the differentials (see \cite{Alexandrov:2024tjo} for details)
\begin{itemize}
    \item $\mathcal{W}_{g,n}$ is a polynomial combination of the differentials $dx, dy$ and $\omega_{g',n'}$ and their derivatives (in particular, the function $y$ enters with derivatives of positive order only, and the explicit dependence of $\tilde{x}_1,\tilde{x}_2 $ in the term with $ \delta_{n,0} \delta_{k,2}$ disappears as well). As a consequence, it is a global meromorphic $(n + 1)$-differential on $\Sigma^{n+1}$. This is an important property hat we will use for establishing our result later. 
    \item We have $\mathcal{W}_{0,0} = [\hbar^{-1} ] \mathcal{W}_{0} = \frac{d x}{u}$, and for $(g,n) \neq (0,0)$, the dependence of $\mathcal{W}_{g,n}$ on $u$ is polynomial.
    \item For each $(g,n) \neq (0,0)$, one has $ [u^0] \mathcal{W}_{g,n} = \omega_{g,n+1}$.
    \item It is useful to present $\mathcal{W}_{g,n}$ for $(g,n) \neq (0,0)$ in the following way 
    \begin{align}
        \mathcal{W}_{g,n} (z,u; z_{\llbracket n  \rrbracket}) =  \omega_{g,n+1}(z, z_{\llbracket n  \rrbracket}) + \bar{\mathcal{W}}_{g,n} (z,u; z_{\llbracket n  \rrbracket})
    \end{align}
    where $\bar{\mathcal{W}}_{g,n}$ is the contribution of the terms with positive exponents of $u$, more precisely $\bar{\mathcal{W}}_{g,n}= \sum_{k \geq 1} [u^k] \mathcal{W}_{g,n}$. Then, the expression of $\bar{\mathcal{W}}_{g,n}$ involves the differentials $\omega_{g',n'}$ for $2g' -1 +n' < 2g -1 +n$ only.
\end{itemize}
This ultimately leads us to the definition of Gen-TR given as follows
\begin{definition} \label{Gen-TRDef}
    The differentials of Gen-TR $\omega_{g,n}$ with $g \geq 0, n \geq 1$ for the set of initial conditions $(\Sigma, dx, dy, B, \mathcal{P} )$ are defined in the unstable cases by
    \begin{align}
        \omega_{0,1} (z) = y(z ) d x(z), \qquad \omega_{0,2} (z_1,z_2) = B(z_1,z_2)
    \end{align}
    and for $2g - 2 + n > 0$ 
    \begin{align}
        \omega_{g,n} (z, z_{\llbracket n-1  \rrbracket}) = &\sum_{q \in \mathcal{P} } \underset{\tilde{z}= q}{\Res} \,\,  \int^{\tilde{z} } B(\cdot,z) \,\, \bar{\omega}_{g,n} (\tilde{z}, z_{\llbracket n-1  \rrbracket}), \qquad \text{where, one has } \nonumber \\ 
        \bar{\omega}_{g,n} (z, z_{\llbracket n-1  \rrbracket}) = & - \sum_{r \geq 1} \left( - d \frac{1}{d y} \right)^r [u^r] \, \bar{\mathcal{W}}_{g,n-1} (z, u; z_{\llbracket n-1  \rrbracket}). 
    \end{align}
\end{definition}
Some comments are in order on this loaded definition (see \cite{Alexandrov:2024tjo}). First, like the differentials of the original TR the new differentials are symmetric in all $n$ arguments for any choice of key points $\mathcal{P}$. Second, if the set of key points is taken to be the set of zeroes of $dx$ and the initial spectral curve data satisfies the original TR setup (all ramification points of $x$ are simple), the differential generated by Def.~\ref{Gen-TRDef} are identical to the ones defined by the EO-TR of equation \eqref{originalTR}. The reason is literarily 
\begin{align}\label{baromgi}
    \bar{\omega}_{g,n} (z, z_{\llbracket n-1  \rrbracket})=\bar{\omega}_{g,n|i} (z, z_{\llbracket n-1  \rrbracket}) + \, \text{holomorphic part}, \quad z \to q_i, \,\,\,\forall i =1, \dots,N
\end{align}
where $\bar{\omega}_{g,n|i} (z, z_{\llbracket n-1  \rrbracket})$ is defined by \eqref{EOTRtdom} and $q_i$ are the ramification points of $x$. The property \eqref{baromgi} is however highly nontrivial.  Gen-TR has the benefit of having a universal global integrand $\bar{\omega}_{g,n}$ independent of the ramification point $q_i$. If the ramification points have higher order and all ramification points of $x$ are chosen in the set $\mathcal{P}$, then the $\omega_{g,n}$ of Gen-TR coincide with the $\omega_{g,n}$ of \cite{Bouchard:2012yg} if $s$ in Def.~\ref{def:specialpoint} is $\pm 1\, \text{mod} \,r$, see also \cite{Borot:2023wik}. Thus, the integrand $ \bar{\omega}_{g,n}$ is also independent of the order of the ramification points. Note also that $\bar{\mathcal{W}}^{(g)}_{n-1} (z, u; z_{\llbracket n-1  \rrbracket})$ has just poles in $z$ at the special points $(=\mathcal{P}\sqcup\mathcal{P}^\vee)$ (where $\mathcal{P}^\vee$ is the complementary set to the set of key points composed of $\vee$-key points in the set of special points) and at the diagonal $z=z_i$. Furthermore, the differentials are determined uniquely in the case where $\Sigma$ is compact and $B$ is the Bergman kernel normalized along the $\mathcal{A}$-cycles for some choice of $\mathcal{A}$- and $\mathcal{B}$-cycles. 

Since the recursive formula of Gen-TR is entirely formulated in terms of globally defined forms, it is possible to provide a further equivalent residue computation on a compactly connected Riemann surface $\Sigma$:
\begin{lemma} \label{contourdef}
    Let $(\Sigma,dx,dy,B,\mathcal{P})$ be the initial data of Gen-TR with $\Sigma$ compact, connected, and $dx,dy$ globally defined. Furthermore, let $B$ be the normalized Bergman kernel for some canonical choice of $\mathcal{A}$- and $\mathcal{B}$-cycles. Then, the recursion of Def.~\ref{Gen-TRDef} is equivalent to the following recursion 
    \begin{align}\label{Lemmaeq1}
       \omega_{g,n} (z, z_{\llbracket n-1  \rrbracket}) = -  \sum_{q \in \{z,z_{\llbracket n-1  \rrbracket} ,\mathcal{P}^\vee \}} \underset{\tilde{z} = q}{\Res}  \int^{\tilde{z} } B(\cdot,z) \,\, \bar{\omega}_{g,n} (\tilde{z}, z_{\llbracket n-1  \rrbracket})
    \end{align}
     where $\bar{\omega}^{(g)}_n (z, z_{\llbracket n-1  \rrbracket})$ is again given by 
    \begin{align}\label{Lemmaeq2}
        \bar{\omega}_{g,n} (z, z_{\llbracket n-1  \rrbracket}) = & - \sum_{r \geq 1} \left( - d \frac{1}{d y} \right)^r [u^r] \, \bar{\mathcal{W}}_{g,n-1} (z, u; z_{\llbracket n-1  \rrbracket}). 
    \end{align}
    \begin{proof}
    The starting point of the proof is an interpretation of the recursion formula, in fact, the recursion works as a residue at the set of key points $q \in \mathcal{P}$ of a projection operator $\Res \int B$ applicable to a Laurent series and extracting its polar part around each of the points. The two pieces are global on the cartesian product of the Riemann surface $\Sigma$. The projection operator admits poles at the point $q=z$, while $\bar{\mathcal{W}}^{(g)}_{n-1}(q, u; z_{\llbracket n-1  \rrbracket})$ (or better $\bar{\omega}_{g,n}(q, z_{\llbracket n-1  \rrbracket})$) admits poles at the points $q \in \{  \mathcal{P}, \mathcal{P}^\vee, z_{\llbracket n-1  \rrbracket}  \}$ where $z_i\in z_{\llbracket n-1  \rrbracket}$ are poles at the diagonals. Thus a contour deformation of Def.~\ref{Gen-TRDef} would provide the alternative expression for the recursion
\begin{align}\nonumber
  &- \frac{1}{2 \pi i} \sum_{i=1}^g \bigg(\oint_{\mathcal{B}_i}   B(\bullet,z) \,\,  \oint_{\mathcal{A}_i}  \bar{\omega}^{(g)}_n (\bullet, z_{\llbracket n-1  \rrbracket})-\oint_{\mathcal{B}_i}    \bar{\omega}^{(g)}_n (\bullet, z_{\llbracket n-1  \rrbracket})\,\,  \oint_{\mathcal{A}_i} B(\bullet,z)  \bigg)\\\nonumber
  &-  \sum_{q \in \{z,z_{\llbracket n-1  \rrbracket}, \mathcal{P}^\vee \}} \underset{\tilde{z} = q}{\Res}  \int^{\tilde{z} } B(\cdot,z) \,\, \bar{\omega}^{(g)}_n (\tilde{z}, z_{\llbracket n-1  \rrbracket}),
\end{align}
where we have used Riemann bilinear identity to express the rest of the poles in the residues. 
Note that the second term in the first line vanishes due to the normalization of $B$ around all $\mathcal{A}_i$-cycles. Furthermore, the first term vanishes since $\bar{\omega}^{(g)}_n$ is constructed recursively by global meromorphic functions and more importantly it is an exact 1-form by definition because it starts in \eqref{Lemmaeq2} with $r\geq1$. Therefore, because the integration of any exact 1-form vanishes over any closed circle on a compact Riemann surface, one gets the desired result.
\end{proof}
\end{lemma}

The lemma can be very useful in situations where the set of key-points $\mathcal{P}$ is complicated or large in comparison to the set of $\vee$-key points $\mathcal{P}^\vee$. Note that the residue around $z$ in \eqref{Lemmaeq1} is easily performed from which the integrand itself is obtained. Thus, we can also write:
\begin{align*}
    \omega_{g,n} (z, z_{\llbracket n-1  \rrbracket}) =\bar{\omega}_{g,n} (z, z_{\llbracket n-1  \rrbracket}) -  \sum_{q \in \{z_{\llbracket n-1  \rrbracket}, \mathcal{P}^\vee \}} \underset{\tilde{z} = q}{\Res}  \int^{\tilde{z} } B(\cdot,z) \,\, \bar{\omega}_{g,n} (\tilde{z}, z_{\llbracket n-1  \rrbracket}).
\end{align*}
In this way, we are actually able to define $\omega_{g,n}$ recursively which have poles in $\mathcal{P}$ without looking at its behavior around those points. The poles of $\omega_{g,n}$ at $\mathcal{P}$ will coincide with those of $\bar{\omega}_{g,n} (z, z_{\llbracket n-1  \rrbracket})$, and the last term literally annihilates all additional poles of $\bar{\omega}_{g,n} (z, z_{\llbracket n-1  \rrbracket})$ not in $\mathcal{P}$.

\subsection{Gen-TR and $x$-$y$ duality} 
An important duality in the context of topological recursion (now known as the $x$-$y$ duality) was understood only recently through a sequence of works \cite{Bychkov:2020ujd,Bychkov:2020yzy,Borot:2021thu,Hock:2022wer,Hock:2022pbw,Alexandrov:2022ydc,Hock:2023dno,Alexandrov:2023tgl}, which motivated the definition of Gen-TR \cite{Alexandrov:2024tjo}. What we will use here is its compatibility with Gen-TR.

Let us define a dual initial data by the tuple $(\Sigma, dy, dx, B, \mathcal{P}^{\vee} )$. The complementary subset in the set of special points is denoted $\mathcal{P}^\vee$ which is also finite and consists of points that are referred to as $\vee$-key points. In the context of the $x$-$y$ duality, the notion of key and $\vee$-key points now comes with a choice; first, one notices that the notion of special points (decomposed into key and $\vee$-key points) is symmetric with respect to the swap relation, and in order to define the input, one needs to choose whether the special point is in the key or $\vee$-key points. If there are no common zeroes and no simple poles of $dx$ and $dy$, there exists a canonical choice depending on the zeroes of $dx$ to be the key points and the zeroes of $dy$ to be the $\vee$-key points. This is exactly the original TR choice. However, if there exists an intersection between the zeroes of $dx$ and $dy$, original TR is not defined, but Gen-TR still makes sense with a preference regarding the choice depending on the case at hand, see for instance \cite{Bouchard:2025rid}. \\

Let us start by recalling the setup of the duality, let us first state that the $x$-$y$ duality can be reformulated completely independently of TR. One could define it for any family of symmetric multi-differentials $(\omega_{g,n})_{(g,n)\neq (0,1)}$ on $\Sigma$. However, for the definition to work, one needs this family to be globally defined, meromorphic\footnote{This extends to the case of an essential singularities by the same contour deformation argument of the present work.}, and admits no poles at the diagonals except for the $(0,2)$ element which is taken to be meromorphic admitting a double order pole on the diagonal such that for any local coordinate $z$ the regularized form $\omega_{0,2} (z_1,z_2) - \frac{dx(z_1) dx(z_2)}{\left(x(z_1)-x(z_2) \right)^2}$ is regular on the diagonal. In the original setting \cite{Alexandrov:2024tjo} it was assumed that the differentials $dx$ and $dy$ are meromorphic, which can be relaxed as we will show. We will recall the duality rather in an explicit way independently from any recursion, and then explain the ingredients inside of it.  
\begin{definition} \label{x-yswapdef}
    For a given family of multi-differentials $\{\omega_{g,n}\}_{(g,n)\neq (0,1)}$ and two meromorphic differentials $dx$ and $dy$, the dual family of multi-differentials denoted by $\{ \omega_{g,n}^\vee\}_{(g,n)\neq (0,1)}$ is defined as 
    \begin{align} \label{transformationdual}
      \omega_{n}^\vee (z_{\llbracket n \rrbracket})= (-1)^n \left( \prod_{i=1}^n \sum_{r=0}^\infty \left( - d_i \frac{1}{d y_i} \right)^r [u_i^r]  \right)    \mathbb{W}_n(z_{\llbracket n \rrbracket}, u_{\llbracket n \rrbracket})\, .
    \end{align}
where we define the extended $n$-differentials $\mathbb{W}_n := \sum_{g=0}^\infty \hbar^{2g-2+n} \mathbb{W}_{g,n}$ where
\begin{align}
\mathbb{W}_n = & \mathbb{W}_n(z_{\llbracket n \rrbracket}, u_{\llbracket n \rrbracket}) := \prod_{j=1}^n \frac{dx_j}{u_j \hbar} e^{u_j (\mathcal{S}( u_j \hbar \partial_{x_j})-1)y_j} \sum_\Gamma \frac{1}{\vert \mathrm{Aut}(\Gamma)  \vert} \nonumber \\
& \prod_{\bullet \in W(\Gamma)} \prod_{i \in I(\bullet)}   u_i \hbar \mathcal{S} (u_i \hbar \partial_{x_i})           \frac{\omega_{|I(\bullet)|} \left( z_{ I(\bullet) } \right)}{\prod_{i\in I(\bullet)} d x_i}
\end{align}
and where the $\omega_n$ is defined in \eqref{collection}, we require it to be regularized when $I=\{i,i\}$. For $i\in I(\bullet)$, the differential operator $ \mathcal{S} (u_i \hbar \partial_{x_i})   $
acts argument-wise on \(\omega_{|I(\bullet)|}(z_{I(\bullet)})\). If a label appears several times in the multiset \(I(\bullet)\), each occurrence is treated as a distinct argument of \(\omega_{|I(\bullet)|}\), and the corresponding differential operators act independently.

\end{definition}
Let us explain in details the main factors constituting this definition of a dual family of differentials. First, the extended differentials $\mathbb{W}_n$ contain a sum over all connected bipartite graphs $\Gamma$ with $n$ labeled vertices, unlabeled vertices (both of valency $\geq1$) and multi-edges. This means the following:
\begin{itemize}
    \item A fixed set of labeled vertices $V(\Gamma) =\{1,2 , \dots,n \} $.
    \item A set of unlabeled vertices denoted $W(\Gamma)$, sometimes referred to as black (or $ \bullet$) vertices. 
    \item The graph is connected meaning that between any two vertices there exists a path (possibly using several edges joining them).
    \item Between a labeled and an unlabeled vertex, several edges may connect the same pair of vertices, giving rise to a multi-edge.
     
    \item To an unlabeled vertex $\bullet\in W(\Gamma)$, we associate the multiset $I(\bullet)$ consisting of the labels from $V(\Gamma)$ to which the vertex $\bullet$ is connected. We denote by $|I(\bullet)|$ the valency of the unlabeled vertex $\bullet\in W(\Gamma)$, with multiplicities of multi-edges taken into account.
\end{itemize} 
Automorphisms of a graph $\Gamma$ arise from the multiplicities of multi-edges and from permutations of unlabeled vertices with identical associated multisets, that is, vertices $\bullet,\bullet'\in W(\Gamma)$ satisfying $I(\bullet)=I(\bullet')$.
Moreover, one may realize that the sum over these graphs is infinite, however, the expansion of the exponential in front and the recollection of $\omega_n$ into power series of $\hbar$ ensures that only a finitely many graphs contribute for each $\omega_{g,n}^\vee$.

The multi-differentials $\mathbb{W}_n$ are symmetric in $n$ pairs of variables $(z_i,u_i)$ where $z_i$ is a point of the $i$-th factor in $\Sigma^n$ and $u_i$ is a formal parameter. Note also that apart from the first term $\mathbb{W}_{0,1}$, the expansion provides polynomials in the formal variables $u_1,\dots,u_n$ whose coefficients are meromorphic multi-differentials on $\Sigma^n$. 

This formula has different interpretations and one useful way of manifesting it is the correspondence between moments and cumulants in the higher order free probability theory \cite{Borot:2021thu}. 

\begin{remark}
   This formula appeared in several instances and in various contexts, we are using the combinatorial interpretation of \cite{Hock:2022pbw}, note that this is equivalent to the form of the formula proposed in \cite{Alexandrov:2022ydc}. In fact, additional combinatorial structure and insight can be extracted, as shown in \cite{Hock:2022pbw}.
\end{remark}

The theorem is finally that the swap is an \textit{involution} or in other words a duality \cite{Alexandrov:2022ydc}, one could start from the dual side in an analogous matter and define the dual differentials
\begin{align}
\mathbb{W}_n^\vee = & \mathbb{W}^\vee_n(z_{\llbracket n \rrbracket}, v_{\llbracket n \rrbracket}) := \prod_{i=1}^n \frac{dy_i}{v_i \hbar} e^{v_i (\mathcal{S}( v_i \hbar \partial_{y_i})-1)x_i} \sum_\Gamma \frac{1}{\vert \mathrm{Aut}(\Gamma)  \vert} \nonumber \\
& \prod_{ \bullet \in W(\Gamma)} \prod_{i \in I(\bullet)}   v_i \hbar \mathcal{S} (v_i \hbar \partial_{y_i})           \frac{\omega^\vee_{\vert I(\bullet) \vert} \left( z_{  I(\bullet)  } \right)}{\prod_{i\in I(\bullet)} d y_i}
\end{align}
with the analogous expansion $\mathbb{W}_n^\vee := \sum_{g=0}^\infty \hbar^{2g-2+n} \mathbb{W}^\vee_{g,n}$. Similar properties hold for the dual differentials $\mathbb{W}_n^\vee$ as the ones for $\mathbb{W}_n$. The inverse relation allows to get from the dual family to the original one, it has exactly the same expression interchanging $u$ and $v$, $x$ and $y$ and the dual $\mathbb{W}^\vee_n(z_{\llbracket n \rrbracket}, v_{\llbracket n \rrbracket})$, that is
\begin{align} \label{transformationdual2}
      \omega_{n} (z_{\llbracket n \rrbracket})= (-1)^n \left( \prod_{i=1}^n \sum_{r=0}^\infty \left( - d_i \frac{1}{d x_i} \right)^r [u_i^r]  \right)    \mathbb{W}^\vee_n(z_{\llbracket n \rrbracket}, v_{\llbracket n \rrbracket})\, .
    \end{align}
This involution $(\omega_{g,n}^\vee)^\vee=\omega_{g,n}$ is highly nontrivial and can be proved purely combinatorially, thus it is independent of whether $dx$ and $dy$ are meromorphic or admit essential singularities.
\\

Note that until now, we did not assume any recursion satisfied by the differentials, however, when assuming one, for instance Gen-TR, one has the following result 
\begin{proposition}[Thm. 3.6 of \cite{Alexandrov:2024tjo}] \label{ProGenTR} If the family of differentials $\omega_{g,n}$ satisfies the recursion defined by Gen-TR for the initial data $(\Sigma, dx, dy, B, \mathcal{P})$, then the $x$-$y$ dual family of differentials $\omega_{g,n}^\vee$ satisfies automatically the Gen-TR for the dual initial data $(\Sigma, dy, dx, B, \mathcal{P}^\vee)$. As a corollary, one gets the same analogous result when the initial data is restricted to the case of the original TR. 
\end{proposition}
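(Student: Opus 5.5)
The plan is to use a uniqueness characterization of Gen-TR and verify its defining properties for the dual family, rather than manipulating the recursion directly. On a compact $\Sigma$ with $B$ normalized on the $\mathcal{A}$-cycles, the Gen-TR differentials for $(\Sigma,dx,dy,B,\mathcal{P})$ should be the unique symmetric family with three properties. First, for $2g-2+n>0$, $\omega_{g,n}$ is meromorphic in each variable with poles only at $\mathcal{P}$. Second, $\omega_{g,n}$ is $\mathcal{A}$-normalized, so that it satisfies the projection property $\omega_{g,n}=\sum_{q\in\mathcal{P}}\Res_{\tilde z=q}\int^{\tilde z}B(\cdot,z)\,\omega_{g,n}(\tilde z,\dots)$. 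Third, the loop equations hold: $\omega_{g,n}-\bar{\omega}_{g,n}$ is holomorphic at every $q\in\mathcal{P}$. Given this characterization (which follows from Def.~\ref{Gen-TRDef} and the properties of $\mathcal{W}_{g,n}$ listed before it), it suffices to show that the family $\omega^\vee_{g,n}$ of Def.~\ref{x-yswapdef} has the same three properties with $x\leftrightarrow y$ and $\mathcal{P}\leftrightarrow\mathcal{P}^\vee$.

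The first step is global meromorphy and the location of the poles of $\omega^\vee_{g,n}$. From \eqref{transformationdual}, $\omega^\vee_{g,n}$ is a finite polynomial expression in $dx$, $dy$, the $\omega_{g',n'}$ and their derivatives, and the operators $d\frac{1}{dy}$. Its possible poles are therefore at special points, at poles of $\omega_{g',n'}$ (that is, at $\mathcal{P}$), and on diagonals.

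I would then handle the diagonals through the graph sum. Off-diagonal singular contributions come only from $\omega_{0,2}$ edges. The regularization $\omega_{0,2}-\frac{dx\,dx}{(x-x)^2}$, combined with the $\mathcal{S}$-operators, should make these contributions cancel, much as the $\delta_{n,0}\delta_{k,2}$ term does in \eqref{TDefinition}.

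Normalization on $\mathcal{A}$-cycles I would obtain from the fact that every term with $r\ge1$ in \eqref{transformationdual} is exact. The $r=0$ term involves only $\omega$'s that are already normalized, so all $\mathcal{A}$-periods vanish.

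The second step is the key one: rewrite the dual extended differentials $\mathbb{W}^\vee$ through the original $\mathcal{W}$ of \eqref{WDefinition}. Formally, $\mathcal{W}_n$ is the $\hbar$-expansion of $e^{\frac1\hbar\int(\cdots)}$ evaluated along the shift $z\mapsto x\pm u\hbar/2$, and the swap acts as a formal Laplace transform exchanging $(x,u)$ with $(y,v)$. I would establish an identity of the form
\[
\mathbb{W}^\vee_{1+n}\big|_{\text{first slot}}\;\longleftrightarrow\;\sum_{r\ge0}\Big(-d\frac{1}{dx}\Big)^r[u^r]\,\mathcal{W}_{n}(z,u;\dots),
\]
with the roles of $x$ and $y$ exchanged. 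This would express $\omega^\vee_{g,n}-\bar{\omega}^\vee_{g,n}$ near a point of $\mathcal{P}^\vee$ in terms of $\omega_{g,n}$ and the principal parts of the original $\bar{\omega}$.

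From this the two remaining properties follow. At $q\in\mathcal{P}^\vee$, the original $\omega$'s are holomorphic, so the dual loop equations hold automatically by construction of $\bar{\omega}^\vee$. At $q\in\mathcal{P}$, the original loop equations, namely the holomorphy of $\omega-\bar{\omega}$ there, should imply that all poles of $\omega^\vee$ at $\mathcal{P}$ cancel.

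I expect this cancellation at $\mathcal{P}$ to be the main obstacle. It requires showing that the combination $\sum_r(-d\frac1{dy})^r[u^r]$ applied to the $u$-polynomial built from the $\omega$'s is regular exactly when the $\mathcal{P}$-loop equations hold. I would first try to prove it through the local formal-Laplace representation: near $q$, write the relevant generating function as $\oint e^{\frac1\hbar(\dots)}$ and argue that the loop equations amount to the statement that this integral has no singular part in $x$. Failing a clean argument of that kind, I would do an induction on $2g-2+n$, using the fact that $\bar{\mathcal{W}}_{g,n}$ only involves lower $\omega_{g',n'}$.
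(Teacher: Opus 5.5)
The paper gives no proof of this statement: it quotes Thm.~3.6 of \cite{Alexandrov:2024tjo}, and the Remark after the proposition records the two local facts that argument rests on. Your architecture is the right one: characterize Gen-TR by pole locations, $\mathcal{A}$-normalization and a local condition at the key points, then verify these for $\omega^\vee$ using the involution. But the decisive step is missing. You leave the absence of poles of $\omega^\vee$ at $\mathcal{P}$ as an open obstacle, and you assert the dual loop equations at $\mathcal{P}^\vee$ ``by construction''. Neither follows as written.

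The missing observation is that, since $[u^0]\mathcal{W}_{g,n-1}=\omega_{g,n}$,
\[
\omega_{g,n}-\bar\omega_{g,n}=\sum_{r\geq 0}\Big(-d\frac{1}{dy}\Big)^r[u^r]\,\mathcal{W}_{g,n-1}(z,u;z_{\llbracket n-1\rrbracket}).
\]
So the Gen-TR condition at $q\in\mathcal{P}$ \emph{is} holomorphy at $q$ of the $x$-$y$ swap performed in the first variable only. Nothing needs to cancel, and no Laplace transform is needed. Your displayed identity pairs $\mathcal{W}_n$, which is built from $x$, with $(-d\frac{1}{dx})^r$, so it is not the right object.

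What remains is to show that the full swap factors through this one-variable swap. Group the graphs of $\mathbb{W}_n$ by the black vertices adjacent to the labeled vertex $1$. This reproduces $\mathcal{W}_k(z_1,u_1;w_{\llbracket k\rrbracket})$, after which the $w$'s are set equal to the $z_j$, $j\geq 2$, under operators acting only in the $z_j$. Hence $\omega^\vee_{g,n}$, as a function of $z_1$, is a finite combination of $(\omega_{g',k+1}-\bar\omega_{g',k+1})(z_1;\cdot)$, acted on only in the other variables. An $\hbar$-degree count shows this combination is triangular: the unique top term is $\pm(\omega_{g,n}-\bar\omega_{g,n})(z_1;z_2,\dots,z_n)$, and all others have $2g'-1+k<2g-2+n$.

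This gives both implications you need. First, Gen-TR at $q$ implies that $\omega^\vee$ is holomorphic at $q$, term by term. Second, by induction on $2g-2+n$, holomorphy of the swap at $q$ implies the loop equation at $q$. Applying this converse to the family $\omega^\vee$ with $(dy,dx)$, using that $(\omega^\vee)^\vee=\omega$ is holomorphic at $\mathcal{P}^\vee$, is what actually produces the dual loop equations.

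Two supporting steps are also unjustified.

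\emph{Non-special points.} The operators $d\frac{1}{dy}$ and the factors $\frac{1}{dx}$ a priori create poles at every zero of $dy$ and of $dx$. This includes non-special points with $r+s\leq 0$, for instance a simple zero of $dy$ at a triple pole of $dx$. That $\omega^\vee$ is holomorphic there is the filtration lemma mentioned in the Remark. It does not follow from the formula being polynomial in $dx$, $dy$ and the $\omega$'s.

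\emph{Diagonals.} Your argument here does not work. The regularization of $\omega_{0,2}$, like the $\delta_{n,0}\delta_{k,2}$ term, only concerns two arguments attached to the same labeled vertex, $I(\bullet)=\{i,i\}$. Poles at $z_i=z_j$ with $i\neq j$ come from edges $\omega_{0,2}(z_i,z_j)$ acted on by $\mathcal{S}(u_i\hbar\partial_{x_i})\mathcal{S}(u_j\hbar\partial_{x_j})$, which produce higher-order diagonal poles in individual terms. Compare the $\omega_{0,3}$ formulas in the examples section, where each summand is singular on the diagonal but the sum is not. Their cancellation, after applying $\prod_i\sum_r(-d_i\frac{1}{dy_i})^r[u_i^r]$ and summing over graphs, is a separate property of the swap formula that you must prove or cite.

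Without these two facts you do not know that $\omega^\vee$ has poles only at $\mathcal{P}^\vee$. So the projection property for $\omega^\vee$, and with it your uniqueness argument, is not yet available. Your $\mathcal{A}$-normalization argument, via exactness of the terms with $r_i\geq 1$, is fine.
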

The goal from recalling this is to test the compatibility of our formula with the $x$-$y$ swap relation as well, this will be reformulated in the next subsection. Before this, let us note some remarks. 
\begin{remark}
    Let us first note that for a given non-special point $q \in \Sigma$, the dual differentials $\omega_{g,n}^\vee$ are holomorphic at this point if and only if the differentials $\omega_{g,n}$ are themselves holomorphic. The proof of this statement relies on the definition of the special points and on a filtration of Laurent series which is preserved by the transformation \eqref{transformationdual}. Furthermore, the principal part of the poles of $\omega_{g,n}$ at $q \in \mathcal{P}$ is equivalent to the requirement that the dual differentials $\omega_{g,n}^\vee$ are holomorphic at this same point. In fact, the special points that are poles of the dual differentials are only the $\vee$-key points in the setting of Gen-TR.
\end{remark}

The case when the dual side is trivial includes a wide set of curves, this is essentially the case when the $\vee$-key points are taken to be the empty set (all the special points are taken in $\mathcal{P}$), in this case, computations become easier to perform (we will see that explicitly in the example section). For this, let us note that there exists an explicit closed expression for the differentials. This formula is due to the fact that when the dual side is trivial, the unlabeled vertices of the graphs of Def.~\ref{x-yswapdef} have valency $=2$, in other words, any unlabeled vertex is connected to exactly two labeled vertices. From a combinatorial perspective, the unlabeled 2-valent vertices can be removed, yielding connected graphs on $n$ labeled vertices with multi-edges (these multi-edges are different from the previous ones). In this case, the formula becomes: 
\begin{corollary} \label{xytrivialdual} [\cite{Hock:2023dno,Alexandrov:2022ydc}]  
    In the case where $\mathcal{P}^\vee=\emptyset$, the differentials of Gen-TR can be computed via the following formula
    \begin{align}
        \mathbb{W}^\vee _n(z_{\llbracket n \rrbracket}, v_{\llbracket n \rrbracket})= &\prod_{i=1}^n \left( e^{v_i \left( \mathcal{S}(v_i \hbar \partial_{y_i}) - 1 \right)x_i} 
        \sqrt{\frac{d \hat{z}_i^+}{dz_i} \frac{d\hat{z}_i^-}{dz_i}} dz_i \right)  (-1)^{n-1} \sum_{\sigma \in \mathrm{Cycle}(n)} \prod_{i=1}^n \frac{1}{\hat{z}_i^+ - \hat{z}_{\sigma(i)}^-}\,, \nonumber \\
        \omega_{n} (z_{\llbracket n \rrbracket})=& (-1)^n \left( \prod_{i=1}^n \sum_{r=0}^\infty \left( - d_i \frac{1}{d x_i} \right)^r [v_i^r]  \right)    \mathbb{W}^\vee_n(z_{\llbracket n \rrbracket}, v_{\llbracket n \rrbracket})\, .
    \end{align}
    where $\sigma \in \mathrm{Cycle}(n)$ denotes a permutation consisting of a single cycle of length $n$. We also define 
    \begin{align}
        \hat{z}(z,v) = e^{ \frac{v \hbar}{2}  \partial_y} z = z+ \frac{v}{2y'} \hbar - \frac{v^2 y''}{8 (y')^3} \hbar^2 + \dots, \qquad \hat{z}_i^\pm = \hat{z} (z_i , \pm v_i)\,.
     \end{align}
     Note that above we have used the formula for the inverse duality explicitly given above and we have $x'_i = x'(z_i)$ and $y'_i = y'(z_i)$. Under the additional assumption that for $y(z)=z$, we have $\hat{z}_i^\pm = z_i \pm \frac{\hbar v_i}{2}.$
\end{corollary}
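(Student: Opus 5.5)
The plan is to reduce everything to the dual side, where the assumption $\mathcal{P}^\vee=\emptyset$ makes the dual family trivial, then evaluate the dual graph sum $\mathbb{W}^\vee_n$ in closed form, and finally recover $\omega_n$ through the inverse relation \eqref{transformationdual2}.

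\emph{Step 1: the dual family is trivial.} By Proposition~\ref{ProGenTR}, the dual family $\omega^\vee_{g,n}$ satisfies Gen-TR for the initial data $(\Sigma,dy,dx,B,\mathcal{P}^\vee)$. Since $\mathcal{P}^\vee=\emptyset$, the residue sum in Definition~\ref{Gen-TRDef} is empty. Hence $\omega^\vee_{g,n}=0$ for all $2g-2+n>0$, and only $\omega^\vee_{0,1}=x\,dy$ and $\omega^\vee_{0,2}=B$ survive.

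\emph{Step 2: only bivalent vertices remain.} In the definition of $\mathbb{W}^\vee_n$, every unlabeled vertex $\bullet$ must carry $\omega^\vee_{|I(\bullet)|}$ with $|I(\bullet)|=2$. A vertex with $I(\bullet)=\{i,j\}$ contributes
\[
v_i\hbar\,\mathcal{S}(v_i\hbar\partial_{y_i})\,v_j\hbar\,\mathcal{S}(v_j\hbar\partial_{y_j})\,\frac{B(z_i,z_j)}{dy_i\,dy_j}.
\]
For $i=j$ one uses the regularized version $B-\frac{dy\,dy}{(y-y)^2}$. The rest of the computation is done in the global coordinate $z$ with $B=\frac{dz_1dz_2}{(z_1-z_2)^2}$, which is the setting in which the stated formula (with $\hat z$ as a shift in $z$) makes sense. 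The key identity is
\[
u\,\mathcal{S}(u\partial)\,\partial F=F(\cdot+u/2)-F(\cdot-u/2),
\]
where the shift is taken in the $y$-variable, i.e.\ $z\mapsto \hat z^\pm$. Since $\frac{B}{dy_1dy_2}=\partial_{y_1}\partial_{y_2}\log(z_1-z_2)$, the off-diagonal vertex weight becomes the logarithm of a cross ratio:
\[
L_{ij}=\log\frac{(\hat z_i^+-\hat z_j^+)(\hat z_i^--\hat z_j^-)}{(\hat z_i^+-\hat z_j^-)(\hat z_i^--\hat z_j^+)}.
\]
For $i=j$, the regularization combines with the prefactor $\frac{dy_i}{v_i\hbar}$. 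The aim there is to show it produces $\sqrt{d\hat z_i^+d\hat z_i^-}\,/(\hat z_i^+-\hat z_i^-)$. This requires expanding $\log\frac{\hat z^+-\hat z^-}{y(\hat z^+)-y(\hat z^-)}$ and using $y(\hat z^\pm)=y\pm v\hbar/2$.

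\emph{Step 3: resummation.} Summing over all (not necessarily connected) multigraphs with weights $1/|\mathrm{Aut}|$ exponentiates to $\prod_{i<j}e^{L_{ij}}$ times the diagonal factors. By the Cauchy determinant identity, this product equals
\[
\det\Bigl[\frac{1}{\hat z_i^+-\hat z_j^-}\Bigr]\prod_i(\hat z_i^+-\hat z_i^-).
\]
Taking the connected part, i.e.\ the logarithm and then the multilinear coefficient, of a determinant gives the sum over single $n$-cycles with sign $(-1)^{n-1}$. This yields the stated expression for $\mathbb{W}^\vee_n$. Applying \eqref{transformationdual2} with the dual family then gives $\omega_n$. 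For $y=z$ one has $\partial_y=\partial_z$, so $\hat z^\pm=z\pm\hbar v/2$ directly.

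\emph{Main obstacle.} I expect the hardest part to be the careful bookkeeping in Step 2–3. This includes matching the diagonal regularized terms and the $e^{v(\mathcal{S}-1)x}$ prefactor to the square-root factor. It also includes checking that the $1/|\mathrm{Aut}(\Gamma)|$ weights for multi-edges reproduce exactly the exponential of the $L_{ij}$. I would verify these first for $n=1,2$ as a consistency check.
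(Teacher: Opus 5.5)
Your proposal is correct and follows the paper's route. The paper's sketch makes the same moves: the dual side is trivial, so only bivalent unlabeled vertices survive; multi-edges exponentiate into the cross-ratio factors $e^{L_{ij}}$; the resulting product is a Cauchy determinant; and its connected part is the sum over $n$-cycles with sign $(-1)^{n-1}$.

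You go slightly beyond the paper's sketch by checking explicitly that the regularized loop terms, combined with the factor $dy_i/(v_i\hbar)$, produce $\sqrt{d\hat z_i^+ d\hat z_i^-}/(\hat z_i^+-\hat z_i^-)$. One small correction to your stated obstacle: the prefactor $e^{v_i(\mathcal{S}(v_i\hbar\partial_{y_i})-1)x_i}$ is simply carried through unchanged and does not need to be matched to anything.
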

The proof of the above expression is not straightforward, so let us explain some of the ingredients of the explicit formula starting from Def.~\ref{x-yswapdef}; for more details, see \cite{Hock:2023dno,Alexandrov:2022ydc}. By summing over all connected graphs on $n$ labeled vertices with multi-edges, it is possible to incorporate multi-edges into exponential generating series while taking into account the order of the automorphism group. We are left with connected graphs on $n$ labeled vertices. The exponentiation together with the action of
$\mathcal{S}(\hbar u_i\partial_{y_i})\mathcal{S}(\hbar u_j\partial_{y_j}),$
produces $\hat{z}_i^+ - \hat{z}_j^-$ for each edge of a connected graph on $n$ labeled vertices. The formula then relies on passing from connected to disconnected ones, thereby turning the sum over all connected graphs into disconnected ones which can be described by a product over all pairs of labeled vertices. Furthermore, one observes that this expression can be rewritten in a form identical to the determinant of a Cauchy matrix. Writing the determinant as sum over all permutations, we can restrict to the connected part back again. All contributions arising from permutations with more than one cycle vanish, leaving only the sum over the remaining $n$-cycles (interpreted as connected part of a determinant).

\subsection{Extension to ramification points of infinite order}

We now address the main interest of this paper, the problem of an essential singularity. Throughout our discussion, one may consider the case of an exponential singularity, but our construction will cover other types as well since we will rely on the infinite principal part of the Laurent series of these functions, exposing their essential singularities. We note that we require the function holding the essential singularity to admit an expression as an infinite limit of another meromorphic function (with no essential singularities). For instance, the $\sin$ and $ \cos$ functions are non-constant entire functions that are not polynomial and hence exhibit an essential singularity at $\{ \infty \}$. 
For instance with their Laurent series expansions, we consider such functions as well.

As we have seen, Gen-TR is formulated as a sum of residues over a chosen set of key points. When one of these key points becomes an essential singularity (seen as a meromorphic singularity or ramification point of infinite order), the recursion is no longer directly suitable, since residues at essential singularities involve infinitely many coefficients of the Laurent expansion. The purpose of this section is to show that this apparent obstruction is artificial: the global nature of Gen-TR in the compact setting allows one to replace every residue at essential singularities by residues at ordinary meromorphic points. 

The principle behind this is that whenever an essential singularity belongs to the set $\mathcal{P}$, Lemma~\ref{contourdef} enables its contribution to be transferred to the complementary poles of the integrand. Consequently, the recursion never requires evaluating residues at the essential singularity. In other words, we have the following result. 

\begin{theorem} \label{main}
    Let $(\Sigma,dx,dy,B,\mathcal{P})$ be the initial data of Gen-TR, where $dx$ (or $dy$) possesses an isolated essential singularity at a point $q\in \Sigma$. Assume this singularity is chosen to belong to the set of key points $\mathcal{P}$. The multi-differentials are well-defined by Lemma~\ref{contourdef}. The recursive residue formula involves only ordinary meromorphic points with Laurent series admitting finite polar part. In particular, no residue at the essential singularity needs to be evaluated.
\end{theorem}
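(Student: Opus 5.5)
The plan is to reduce Theorem~\ref{main} to Lemma~\ref{contourdef}, checking that each ingredient of that lemma still makes sense when $q\in\mathcal{P}$ is an isolated essential singularity of $dx$. The case of $dy$ is handled by the $x$-$y$ swap of Proposition~\ref{ProGenTR}, together with the purely combinatorial involutivity of Definition~\ref{x-yswapdef}.

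\emph{Step 1: structure of the integrand.} By induction on $2g-2+n$, I will show that $\bar{\omega}_{g,n}(\tilde z, z_{\llbracket n-1\rrbracket})$ is a global differential in $\tilde z$. It is holomorphic on $\Sigma$ minus a finite set consisting of $q$, the remaining special points in $\mathcal{P}\sqcup\mathcal{P}^\vee$ and the diagonals $\tilde z=z_i$. At every point other than $q$ its Laurent expansion has finite polar part.

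This uses the first listed property of $\mathcal{W}_{g,n}$: it is a polynomial in $dx$, $dy$, the $\omega_{g',n'}$ and their derivatives, with $y$ entering only through derivatives. The operator $(-d\frac{1}{dy})^r$ only introduces poles at zeros and poles of $dy$, which are special or meromorphic points away from $q$, or at $q$ itself. The induction hypothesis is that each $\omega_{g',n'}$ is holomorphic away from $\mathcal{P}\cup\{\text{diagonals}\}$ and meromorphic away from $q$; the Lemma's output formula preserves this. I will also record that $\bar\omega_{g,n}$ is exact in $\tilde z$, since $r\geq 1$ in \eqref{Lemmaeq2}.

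\emph{Step 2: define the recursion at $q$ and deform the contour.} At the essential singularity, I interpret $\Res_{\tilde z=q}$ as $\frac{1}{2\pi i}\oint_{\gamma_q}$ over a small circle around $q$. This equals the coefficient extraction from the (infinite) Laurent series, and it also equals the $N\to\infty$ limit of residues for meromorphic approximants $x_N\to x$, by uniform convergence on $\gamma_q$.

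Next I cut $\Sigma$ along the $\mathcal{A}$- and $\mathcal{B}$-cycles, chosen to avoid all singular points, so that the integrand $\int^{\tilde z}B(\cdot,z)\,\bar\omega_{g,n}$ becomes a single-valued function on the fundamental polygon. The sum of contour integrals around all isolated singularities equals the boundary integral, and only finitely many singular points are involved. The Riemann bilinear identity only needs Stokes' theorem, not meromorphy at $q$, so it applies verbatim.

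The boundary terms vanish exactly as in the proof of Lemma~\ref{contourdef}. The $\mathcal{A}$-periods of $B$ vanish by normalization, and all periods of the exact form $\bar\omega_{g,n}$ vanish. Hence the sum over $\mathcal{P}$, including $q$, equals minus the sum of residues over $\{z, z_{\llbracket n-1\rrbracket},\mathcal{P}^\vee\}$, as in \eqref{Lemmaeq1}.

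\emph{Step 3: conclusion.} The points $z$, $z_i$ and $\mathcal{P}^\vee$ are distinct from $q$, which lies in $\mathcal{P}$, and we may take $z,z_i$ generic. By Step 1 these points all have finite polar parts. The residue at $z$ simply returns $\bar\omega_{g,n}(z,\dots)$. So the recursion is well-defined, and no coefficient extraction at $q$ is needed.

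The main obstacle I expect is in Step 1 and in the exactness claim. One must ensure that $\bar\omega_{g,n}$ is genuinely single-valued and exact near $q$. One must also ensure that the infinite-order expansions defining $\mathcal{T}_n$ (the operators $\mathcal{S}(u\hbar\partial_x)$) produce, order by order in $u$ and $\hbar$, finitely many terms. Otherwise no new singularities appear at points other than $q$.

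To handle this, I would first check that each coefficient $[u^r]\bar{\mathcal W}_{g,n-1}$ is a finite polynomial expression in derivatives, so that holomorphy off the singular set is local and automatic. For exactness near $q$, I would use that $d(\cdot)$ of a function holomorphic on the punctured neighborhood has zero period around $\gamma_q$.
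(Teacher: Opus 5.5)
Your argument follows the paper's route. It uses globality of $\bar\omega_{g,n}$ from the polynomial structure of $\bar{\mathcal W}_{g,n-1}$, then the contour deformation of Lemma~\ref{contourdef}, with the period terms removed by exactness of $\bar\omega_{g,n}$ and the $\mathcal A$-normalization of $B$. Reading $\Res_{\tilde z=q}$ as a contour integral and getting the bilinear identity from Stokes is a cleaner version of what the paper does.

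\textbf{The gap is in Step~1.} You claim that $\bar\omega_{g,n}$ is holomorphic off a \emph{finite} set and that $q$ is an isolated singularity; this does not follow as written.
\begin{itemize}
\item Your justification tracks only zeros and poles of $dy$. But $\bar{\mathcal W}_{g,n-1}$ also has poles at zeros of $dx$, through the factors $1/d\tilde x_i$ and $\partial_{\tilde x_i}$ in $\mathcal T_n$.
\item An isolated essential singularity of $dx$ generically forces infinitely many zeros of $dx$ accumulating at $q$. Write $dx=f(w)\,dw$ near $q$; by the big Picard theorem $f$ vanishes infinitely often near $w=0$ unless $0$ is its exceptional value. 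This happens for $dx=\cos z\,dz$, and for the paper's own example $dx=\zeta'(z)\,dz$ in Sec.~\ref{sec.riemannzeta}.
\item Where $dy$ is regular and non-vanishing, these zeros are special points ($r\geq 2$, $s=1$), and $\bar\omega_{g,n}$ has genuine poles there.
\end{itemize}
Consequently no circle $\gamma_q$ isolates $q$, and $\Res_{\tilde z=q}$ is not even defined. If those points lie in $\mathcal P^\vee$, as they must when $\mathcal P$ is finite, the right-hand side of \eqref{Lemmaeq1} becomes an infinite sum of residues with no convergence control.

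What is missing is the choice of key points that the paper's proof insists on. All special points clustering at $q$ must be placed in $\mathcal P$, as is done for the $\zeta$ example. Then, outside a small neighbourhood of $q$, the integrand has only finitely many singular points. The sum $\sum_{p\in\mathcal P}\Res_{\tilde z=p}$ should then be read as a single contour integral over a cycle separating $\mathcal P$ from the finite set $\{z,z_{\llbracket n-1\rrbracket}\}\cup\mathcal P^\vee$. Alternatively, assume explicitly that there are only finitely many special points, e.g.\ because $0$ is the Picard exceptional value, as for $dx=e^{g}\,dg$ with $g$ rational. With either fix, your Stokes argument goes through verbatim.

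Separately, your appeal to the $x$-$y$ swap for the $dy$ case (which the paper also makes) is unnecessary, and as stated it does not work:
\begin{itemize}
\item Nothing in your Steps~1--3 uses which of $dx,dy$ carries the singularity, so they cover the $dy$ case directly. The same caveat applies to zeros of $dy$ accumulating at $q$.
\item Prop.~\ref{ProGenTR} is only established for meromorphic $dx,dy$.
\item The swap does not land in the case you treated: $q\in\mathcal P$ becomes a $\vee$-key point of $(\Sigma,dy,dx,B,\mathcal P^\vee)$.
\end{itemize}
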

\begin{proof}
Let us assume that the essential contribution is given in $dx$, this means that near the point $q\in \Sigma$ the function $dx$ admits a Laurent series of infinite order polar part. Note now that the same property holds for the multi-differentials $\omega_{g,n}$ due to their polynomial combination expression in terms of the differentials $dx,dy, \omega_{g',n'}$ and their derivatives. This property provides the global nature of these differentials. We choose the set of key points to be the set of singularities given by the infinite convergent sequence contributing to the essential singularity. The recursion of Gen-TR depends on the residues of the global differentials $\bar{\omega}_{g,n}$, since the latter is globally meromorphic away from the prescribed singularities (due to our choice of the set of key points), the global residue theorem allows the contour surrounding an essential singularity to be continuously deformed into contours enclosing all the remaining poles. This contour deformation is given precisely by Lemma~\ref{contourdef}. The case of an essential singularity in the function $y$ could be equivalently treated via the $x$-$y$ dual formula interchanging the roles of $x$ and $y$. The reason is that the $x$-$y$ duality is purely combinatorial and therefore holds independently of whether $dx$ or $dy$ admits an essential singularity.
\end{proof}

\begin{remark}
    One may wonder if the above theorem is valid in cases where both functions $dx$ and $dy$ admit an essential singularity and the answer is in principle positive. The proof of this statement is similar to the proof of the theorem: one chooses the set of key points $\mathcal{P}$ to have all the essential contribution despite the fact that this could be different than the canonical choice of key and $\vee$-key points, this also includes the case where the point is an essential singularity of both $dx$ and $dy$.
\end{remark}

We note that this theorem is not merely formal, in fact, previous attempts to accommodate essential contribution relied on limits of finite ramification points (meromorphic contribution). On the contrary, in this setup, one does not need such an assumption as long as the essential contribution is located and taken to be in the set of key points $\mathcal{P}$. In other words, the recursion itself never needs to see the essential contribution after moving the contour. We will show this on an example where no limit of a sequence of meromorphic spectral curves is admitted in Sec.~\ref{Sec3}. 

Let us show this on a natural example 
\begin{example}
    Take the spectral curve on $\Sigma=\mathbb{P}^1$ admitting the following parameterization 
    \begin{align}
        x = z e^z, \qquad y=z
    \end{align}
    One has an exponential (essential) singularity, one expresses the exponential as an infinite sequence $e^z = \underset{r \to \infty}{\lim} (1+ \frac{z}{r})^r$ allowing to see that one of the special points is located at $z=-r$ of infinite order in the limit. Note that in this case, the set of multi-differentials computed for each $r$ denoted $\omega^r_{g,n}$ converges in the limit. Taking this point as the only key point $\mathcal{P}=\{ -r\}$, its complement, the set of $\vee$-key is $\mathcal{P}^\vee=\{\frac{-r}{1+r}\}$. Thus, Lemma~\ref{contourdef} allows us to move the contour of integration getting then the residue contribution from $\mathcal{P}$ through the residue at $\mathcal{P}^\vee$ and $\{z,z_{\llbracket n  \rrbracket}\}$. We detail a similar example in the next section. \\

    The relation between $x$ and $y$ is
\begin{align*}
    x=ye^y.
\end{align*}
Interestingly, this is exactly the same relation as for simple Hurwitz numbers \cite{Bouchard2008} (possibly after a simple symplectic shift). However, the parametrization differs by the change of variable $z \mapsto \log z$. All multi-differentials are different and we do not know if there is a relation.
\end{example}

When the set of $\vee$-key points is chosen to be the empty set (which allows to get a more simplified residue), the dual side becomes automatically trivial from Prop.~\ref{ProGenTR}. Let us stretch the above example to include a wider range of spectral curves. 
\begin{proposition}[Collected essential contribution]\label{Essentialcontribution}
   Let $\Sigma= \mathbb{P}^1$ and consider a non-constant rational function $g(z)$ and the following family of spectral curves
    \begin{align}
        x = e^{g(z)}, \qquad y= z\,.
    \end{align}
    The differentials $\omega_{g,n}$ of the generalized topological recursion can be computed through Lemma~\ref{contourdef} by taking the set of key points to consist at least of $\{\, \text{poles of $g(z)$} \, \}$. The set of zeroes of $g'(z)$ can be chosen arbitrarily as key or $\vee$-key points. 
\end{proposition}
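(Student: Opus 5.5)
The plan is first to classify the special points of $(\mathbb{P}^1, dx, dy)$ with $dx = g'(z)e^{g(z)}dz$ and $dy = dz$, and then to invoke Lemma~\ref{contourdef} and Theorem~\ref{main}.

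\textbf{Classification of special points.} Since $dy=dz$, we have $s=1$ at every finite point and $s=-1$ at $z=\infty$.
\begin{itemize}
\item At a finite point where $g$ is regular, $e^{g}$ is holomorphic and non-vanishing, so $r-1=\mathrm{ord}\,g'$. The point is special exactly when $g'$ vanishes there.
\item At a pole of $g$ (finite, or $z=\infty$ if $g$ has a pole there), $dx$ has an essential singularity. These points fall outside Definition~\ref{def:specialpoint} and must be placed in $\mathcal{P}$. This is the hypothesis of Theorem~\ref{main}: the essential singularity is declared a key point, viewed as the limit of the finite-order ramification points coming from $e^{g}=\lim_{N}(1+g/N)^N$, as in the example $x=ze^z$ above.
\item If $g$ is regular at $\infty$, write $g'\sim c\,z^{-k-1}$ with $k\ge1$. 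Then $dx\sim z^{-k-1}dz$, which in the local coordinate $w=1/z$ gives $r-1=k-1$, while $s=-1$. Hence $r+s=k-1\ge 0$, with equality only for $k=1$ (then $r+s=0$, non-special). For $k\ge2$, $\infty$ is an ordinary meromorphic special point.
\end{itemize}
So, apart from the poles of $g$, every special point is a zero of $g'$ (including possibly $\infty$), and $dx$, $dy$ are meromorphic there.

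\textbf{Applying the residue formula.} Choose $\mathcal{P}\supseteq\{\text{poles of } g\}$, and let $\mathcal{P}^\vee$ be any subset of the zeros of $g'$, the rest going into $\mathcal{P}$. Since $\Sigma=\mathbb{P}^1$ has genus zero, $B=\frac{dz_1dz_2}{(z_1-z_2)^2}$ and the $\mathcal{A}/\mathcal{B}$-cycle terms in the proof of Lemma~\ref{contourdef} are absent. By induction, $\bar{\mathcal{W}}_{g,n-1}$, and hence $\bar\omega_{g,n}$, is a polynomial in $dx$, $dy$, the lower $\omega_{g',n'}$ and their derivatives. On $\mathbb{P}^1$ minus the poles of $g$, the functions $y=z$ and $x=e^{g}$ are holomorphic and $dx$ is meromorphic. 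Therefore $\bar\omega_{g,n}$ is meromorphic there, with poles only at special points and diagonals. The residue theorem on the sphere then replaces the sum over $\mathcal{P}$ by minus the residues at $\{z, z_{\llbracket n-1\rrbracket}, \mathcal{P}^\vee\}$, exactly as in \eqref{Lemmaeq1}.

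\textbf{Why the choice for zeros of $g'$ is free.} Each zero of $g'$ is an ordinary meromorphic special point, so Definition~\ref{Gen-TRDef} allows it to be placed either in $\mathcal{P}$ or in $\mathcal{P}^\vee$. Both choices give valid, though generally different, families of $\omega_{g,n}$. Whichever side a zero lands on, the contour-deformed formula only takes residues at finite-polar-part points. In particular, when all zeros of $g'$ are put into $\mathcal{P}$, we get $\mathcal{P}^\vee=\emptyset$ and Corollary~\ref{xytrivialdual} applies.

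\textbf{Main obstacle.} The delicate step is justifying the residue theorem when the essential singularities themselves lie inside $\mathcal{P}$. The residue at such a point is defined as a contour integral on a small circle. For this to agree with Definition~\ref{Gen-TRDef}, one must check that:
\begin{itemize}
\item $\bar\omega_{g,n}$ is single-valued near the pole of $g$, which holds because only $e^{g}$ and its derivatives enter;
\item the exactness argument of Lemma~\ref{contourdef} still holds. It does, since $\bar\omega_{g,n}=-\sum_{r\ge1}(-d\frac1{dy})^r(\dots)$ is $d$ of a single-valued function on the punctured sphere.
\end{itemize}
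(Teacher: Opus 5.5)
Your proof is correct and has the same skeleton as the paper's. The special points are the poles of $g$ together with the critical points of $g$; the poles go into $\mathcal{P}$, the critical points are free, and Lemma~\ref{contourdef} is applied on the sphere, where no cycle terms arise.

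The difference is in how the essential points are legitimized. The paper spends essentially its whole proof on the approximation $e^{g}=\lim_{n\to\infty}(1+g/n)^n$, locally uniform away from the poles of $g$. It then uses the factorization $(1+g/n)^n=(g/n)^n(1+n/g)^n$ to see that the poles of $g$ become the essential singularities. This is what lets it count them among the special points, as ramification points of infinite order, and invoke Theorem~\ref{main}.

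You only cite this limit. Instead you justify the deformation intrinsically:
\begin{itemize}
\item the residue at an isolated essential singularity is a contour integral;
\item the residue theorem holds on the punctured sphere;
\item exactness of $\bar\omega_{g,n}$ makes the base point of $\int^{\tilde z}B$ irrelevant.
\end{itemize}
Your route is self-contained and needs no approximating sequence, in the spirit of the remark after Theorem~\ref{main}, but the differentials are then defined by fiat. The paper's route ties them to genuine finite-order curves, though it never checks that $\omega^{N}_{g,n}\to\omega_{g,n}$.

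The two views can be reconciled. For $x_N=(1+g/N)^N$ the poles of $g$ are non-special ($r+s\le 0$ there), and the key points $\{g=-N\}$ converge to the poles of $g$. The deformed formula for the finite curve therefore only takes residues on fixed small circles around $z$, $z_i$ and $\mathcal{P}^\vee$. On those circles everything converges uniformly, by induction.

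Your analysis at $\infty$ is sharper than the paper's. Read literally, \emph{zeros of $g'$} would include $\infty$ whenever $g$ is regular there. You correctly find that $\infty$ is special only when $g-g(\infty)$ vanishes to order at least two.

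There are two small slips:
\begin{itemize}
\item $y=z$ is not holomorphic at $\infty$ when $g$ is regular there. This is harmless, since only $dy$ enters.
\item $\bar\omega_{g,n}$ is rational rather than polynomial in $dx,dy$. So its regularity at $\infty$ when $g-g(\infty)$ has a simple zero comes from the local analysis in \cite{Alexandrov:2024tjo}, not from meromorphy alone. This is the only non-special point here with $r+s\le 0$.
\end{itemize}
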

\begin{proof}
    The first step in the proof relies on establishing a limit for the function $x$ (and therefore to $dx$ as well). In fact, we would like to have 
    \begin{align}
        e^{g(z)} = \underset{n \to \infty}{\lim} \left( 1 + \frac{g(z)}{n} \right)^n, \qquad \text{where $g(z)$ is a rational function}
    \end{align}
    For this, one needs to check the convergence of the expression, since we are in the compact case, consider a compact subset $U$ and a local coordinate on $w \in U$, away from the set of poles of $g(z)$ one has a uniform convergence 
    \begin{align}
        \left( 1 + \frac{w}{n} \right)^n \to e^w, \quad \text{for  $\vert w \vert \leq M$}
    \end{align}
    with $M$ being the supremum of the constant function $g(z) \in U$. 
    The true problem occurs near a pole $a$ of the function $g(z)$, for this one has to take a punctured neighborhood of that pole. In other words, $g(z)$ is always holomorphic at the punctured disc $D^*(a,r)= \{ 0 < \vert z -a \vert <r \}$. Then one applies the same reasoning as the holomorphic part. This allows us to get the required exponential limit. Due to the limit formula, the set of poles of the function $g(z)$ will be the set of essential singularities, this is seen from
    \begin{align}
        \left( 1 + \frac{g(z)}{n} \right)^n = \left( \frac{g(z)}{n} \right)^n \left(1 + \frac{n}{g(z)} \right)^n
    \end{align}
    For fixed $n$, the second factor tends to $1$ near a singularity $a$ while the first term controls it with an exponent $n$. These poles together with the zeros of $g'(z)$ constitute the set of special points of Def.~\ref{def:specialpoint}. The main idea now is to choose the poles of $g(z)$ as key points. One thus has still the freedom to choose the zeroes of $g'(z)$ in $\mathcal{P}$ or $\mathcal{P}^\vee$. A contour deformation as in Lemma~\ref{contourdef} gives us the $\omega_{g,n}$ as a residue computation of points that are meromorphic. 
\end{proof}

\begin{remark}
    Note that the above proposition naturally generalizes to the case of non-trivial differentials $dy$ which admits zeroes and poles (for instance $y$ given by a rational function of $z$). This covers also the case where an essential singularity is at the same time a pole or a simple zero of the differential $dy$. One just needs to pay attention to the choice of key and $\vee$-key points.  
\end{remark}
The above result extends to other situations as well relying on the fact that a certain sequence $\phi_n$ converges locally uniformly on a compact subset on which the function $g$ is holomorphic. So that the limit is really an instance of $\phi_n \circ g \to e \circ g$. The result of the above proposition is applied only for the genus $0$ case due to nice rational parameterizations of the curve. However, our residue formula applies to any spectral curve without the need for an explicit parameterization and more importantly to higher genus curves. For instance, the idea of a convergent sequence leading to the contribution of an essential singularity is still valid independently from the underlying compact surface $\Sigma$. 
\begin{example}
   An important genus-one curve in the context of topological recursion was considered in \cite{Iwaki:2019zeq}. The curve is given by
\begin{align*}
    y^2=x^3+ax+b,
\end{align*}
and was shown to provide a formal solution of the Painlevé I equation. It can be parametrized in terms of the Weierstraß $\wp$-function by
\[
x=\wp(z), \qquad y=\wp'(z).
\]

To obtain an essential singularity, one may consider the limit $r\to \infty$ of
\[
x_r=\left(1+\frac{\wp(z)}{r}\right)^r,
\]
which converges to
\[
x_\infty=e^{\wp(z)},
\]
while keeping $y=\wp'(z)$. The Bergman kernel can also be expressed in terms of the Weierstraß function; see \cite{Iwaki:2019zeq}. This curve falls into the class of spectral curves that can now be treated using  Thm.~\ref{main}.

The exponential singularity is located at $z=0$, which is simultaneously a pole of $\wp$ and $\wp'$. The formulas of Lemma~\ref{Lemmaeq1} remain perfectly valid. The additional special points are the zeros of $\wp'(z)$ and $\wp''(z)$. We are not aware of any enumerative interpretation of the multi-differentials $\omega_{g,n}$ associated with this curve.
\end{example}

Our final remark on this matter is that this residue deformation is indeed compatible with the $x$-$y$ duality considered in the previous section. 
\begin{corollary}
The $x$–$y$ swap extends to the case in which $dx$  and $dy$ admit essential singularities.
\end{corollary}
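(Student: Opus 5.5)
The plan is to split the corollary into two parts: the well-definedness of the swap formula itself, and its compatibility with the recursion. Throughout, we work on a compact connected $\Sigma$ with normalized Bergman kernel $B$, where $dx$ and/or $dy$ have isolated essential singularities collected in the key points $\mathcal{P}$.

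\textbf{Step 1: the transformation is well defined.} Definition~\ref{x-yswapdef} uses only a finite number of local algebraic operations for each fixed $(g,n)$. These are
\begin{itemize}
\item products of the $\omega_{g',n'}$;
\item the operators $\mathcal{S}(u_i\hbar\partial_{x_i})$ and $(-d_i\frac{1}{dy_i})^r$;
\item the coefficient extraction $[u_i^r]$.
\end{itemize}
The graph sum is truncated to finitely many graphs by the $\hbar$-grading, as noted after the definition. So $\omega^\vee_{g,n}$ is a polynomial in $dx$, $dy$, the $\omega_{g',n'}$ and finitely many of their derivatives, divided by powers of $dx_i$ and $dy_i$. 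Away from the essential singularities and the zeros of $dx,dy$ this is holomorphic, and at each remaining point it is meromorphic with finite polar part. At an essential singularity $q$ it is single-valued and analytic on a punctured disc. Hence $\omega^\vee_{g,n}$ is a globally defined multi-differential, holomorphic off a finite set, with isolated (possibly essential) singularities. Moreover, since $y$ and $x$ enter only through positive-order derivatives, this holds even when the functions themselves are not single-valued.

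\textbf{Step 2: involutivity.} The identity $(\omega^\vee)^\vee=\omega$ is a formal combinatorial identity between the operators in \eqref{transformationdual} and \eqref{transformationdual2}, and the paper states it is independent of meromorphicity. I will invoke it as a formal identity, applied pointwise on the punctured neighborhoods where every object is analytic.

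\textbf{Step 3: Gen-TR on the dual side.} I will show that $\omega^\vee_{g,n}$ satisfies Gen-TR for $(\Sigma,dy,dx,B,\mathcal{P}^\vee)$. First approximate by a sequence, as in the proof of Proposition~\ref{Essentialcontribution}: replace $x$ by $x_N$, for instance $(1+g/N)^N$, which is meromorphic and converges locally uniformly away from the essential singularity. For each $N$, Proposition~\ref{ProGenTR} gives that $\omega^{\vee,N}_{g,n}$ satisfies Gen-TR with key points $\mathcal{P}^\vee_N$.

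Next, apply Lemma~\ref{contourdef} to both sides. The $\omega^N_{g,n}$ are then computed by residues at $\{z,z_i\}\cup\mathcal{P}^\vee_N$, and the $\omega^{\vee,N}_{g,n}$ by their own recursion over $\mathcal{P}^\vee_N$. In both recursions the residue points stay at a positive distance from the collapsing cluster $\mathcal{P}_N\to q$. Locally uniform convergence on small circles around these points lets us pass to the limit under the residue, order by order in the induction on $2g-2+n$. This yields:
\begin{itemize}
\item $\omega_{g,n}=\lim\omega^N_{g,n}$;
\item $\omega^\vee_{g,n}=\lim\omega^{\vee,N}_{g,n}$, since the swap is finitely many local operations;
\item the limiting dual recursion.
\end{itemize}
When no approximating sequence exists, as in the example announced for Section~\ref{Sec3}, I would instead argue directly. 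Show that the principal parts of $\omega^\vee_{g,n}$ at $\mathcal{P}^\vee$ are those of the dual integrand, and that $\omega^\vee_{g,n}$ is holomorphic at non-special points and at $\mathcal{P}$, using the remark on filtrations of Laurent series. Then conclude by a Riemann bilinear argument as in Lemma~\ref{contourdef}: a differential with vanishing $\mathcal{A}$-periods and no poles is zero.

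\textbf{Main obstacle.} The main obstacle is holomorphicity of $\omega^\vee_{g,n}$ at the essential singularity $q\in\mathcal{P}$. The filtration argument in the remark is stated for Laurent series with finite polar part and does not obviously apply to infinite principal parts. I would attack this first through the limit route, checking that the cluster $\mathcal{P}_N$ disappears from the poles of $\omega^{\vee,N}$ uniformly, so that the limit is holomorphic on a full disc around $q$ by Riemann's removable singularity theorem.
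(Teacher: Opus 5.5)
Your Steps 1 and 2 are exactly the paper's argument. For fixed $(g,n)$ the swap of Definition~\ref{x-yswapdef} is a finite local expression in $dx$, $dy$, the $\omega_{g',n'}$ and their derivatives, and $(\omega^\vee)^\vee=\omega$ is a combinatorial identity that does not depend on meromorphicity. The paper stops there. It only asserts that the dual differentials then live on $\mathcal{P}^\vee$, with a footnote pointing to the contour deformation of Lemma~\ref{contourdef}.

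Your Step 3 is a genuinely different and more substantive route to Gen-TR compatibility. You import Proposition~\ref{ProGenTR} from meromorphic approximants $x_N$, and you use Lemma~\ref{contourdef} so that every residue in both recursions sits away from the collapsing cluster $\mathcal{P}_N\to q$. This gives an actual proof for every curve that is a locally uniform limit of meromorphic curves with $\mathcal{P}^\vee_N\to\mathcal{P}^\vee$ away from $q$. That covers $e^{g(z)}$, $e^{\wp(z)}$ and the curves $C_r$ of Section~\ref{sec.exampleez}. The cost is an approximability hypothesis that the paper's assertion does not formally impose.

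In that route your ``main obstacle'' goes away. The limiting dual recursion writes $\omega^\vee_{g,n}(z,\dots)$, for $z$ in a punctured disc around $q$, as a finite sum of residues at $\mathcal{P}^\vee$ of $\int^{\tilde z}B(\cdot,z)\,\bar\omega^\vee_{g,n}(\tilde z,\dots)$. This is a meromorphic differential in $z$ with poles only at $\mathcal{P}^\vee$, hence holomorphic at $q$, so no uniformity check on the cluster is needed.

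Your direct route for non-approximable curves remains incomplete exactly where you say, because the filtration argument needs finite polar parts. The paper does not close this gap either. Note also that Step 1 tacitly needs finitely many zeros of $dx$ and $dy$. This fails for the $\zeta$-example: the zeros of $\zeta'$ accumulate at $\infty$, so $\infty$ is not even an isolated singularity of $\bar\omega_{g,n}$ there.
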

Differentials and their duals ensure that the initial data of the dual differentials are defined on the set of $\vee$-key points. This equivalently offers an alternative approach to dealing with essential singularities through the consideration of the $x$–$y$ dual differentials. Since the $x$–$y$ duality for an arbitrary system of multi-differentials holds -- and can in fact be proven by combinatorial methods independent of the specific forms of $dx$ and $dy$ -- this result follows.

While this strategy remains widely used and preferred when the dual side is trivial, there are some limitations especially when the dual side is complicated. The contour deformation approach adds a residue contribution at the points $\{z,z_i \}$. However, it does not rely on the triviality of the dual side. It therefore offers an alternative resolution to the problem. We will now turn our attention to the examples.

\section{Examples} \label{Sec3}
We present a sequence of examples illustrating our construction from different perspectives and with increasing complexity.

Before turning to examples of spectral curves with essentially singular ramification points, we first discuss an example of so-called logarithmic topological recursion (Log-TR). This example can also be interpreted as a spectral curve with an exponential singularity. Whether one views the singularity as logarithmic or exponential is determined by the choice of Bergman kernel (equivalently by the projection) as we will show.

For curiosity, we also include a transcendental example involving a spectral curve related to the Riemann \(\zeta\)-function with a trivial dual side.

Finally, we discuss the so-called Mirzakhani curve, which computes Weil--Petersson volumes and naturally exhibits an essential singular point on the dual side. After applying a suitable symplectic transformation, the dual side of the Mirzakhani curve can be simplified from differentials possessing infinitely many poles and an essential singularity to differentials with just a single essential singular point. Those differentials admit a very interesting representation, appearing to have poles at the ramification points of $x$ and along the diagonal, which are apparent, thus regular points.

\subsection{Log-TR example}\label{sec:logTR}
Let $\Sigma=\mathbb{P}^1$ and let the two meromorphic differentials be defined by
\begin{align}\label{logTRpara}
    dx=dz\,\qquad dy=\frac{dz}{z}.
\end{align}
The differential $dy$ possesses a logarithmic singularity corresponding to $y=\log z$. As a global curve, $x$ and $y$ therefore satisfy the equation
\begin{align}
    e^y-x=0.
\end{align}
Furthermore, in the parametrization \eqref{logTRpara} in terms of the global parameter $z$, the Bergman kernel takes the form
\begin{align*}
    B(z_1,z_2)=\frac{dz_1\,dz_2}{(z_1-z_2)^2}.
\end{align*}
To specify the full spectral curve in Gen-TR, we have to provide the set of key-points, which in this case is $\mathcal{P}=\{0\}$. Note that the only special point is at $z=0$, implying that the dual $\vee$-key is empty, $\mathcal{P}^\vee=\emptyset$.

This is a very specific setting in which we consider Gen-TR. We essentially have the following situation: $dx$ and $dy$ are meromorphic, and the set of key-points is taken to be the union of the ramification points of $x$ (which we denote by $\{q_i\}$) and the points where $dy$ has residues but $dx$ is regular (which we denote by $\{a_j\}$), that is, $\mathcal{P}=\{q_i\}\cup \{a_j\}$. This is exactly the situation in which the recursion of Gen-TR is equivalent to the so-called Log-TR \cite{Hock:2023dno,Alexandrov:2023tgl,Alexandrov:2024tjo,Hock:2026qrp}. Thus, Log-TR is one particular realization of Gen-TR from Def.~\ref{Gen-TRDef} in the above setting. The differentials in Log-TR are defined by $\omega_{0,1}:=ydx$ and $\omega_{0,2}:=B$, and for any $g\geq 0$ and $n\geq 1$ such that $2g+n-2>0$ locally by
\begin{align}
        \label{LogTRDef} \omega_{g,n}(z_1,\dots,z_n)=&\frac{1}{2}\sum_{i}\Res_{z\to q_i}\frac{\int_{z}^{\sigma_i(z)} \omega_{0,2}(z_1,.)}{\omega_{0,1}(\sigma_i(z))-\omega_{0,1}(z)}\Big(\omega_{g-1,n+1}(z,\sigma_i(z),z_2,\dots,z_n)\\\nonumber
    &+\sum_{\substack{g_1+g_2=g\\I_1\sqcup I_2=\{2,\dots, n\} \\(g_i,|I_i|)\neq (0,0)}} \omega_{g_1,|I_1|+1}(z,z_{I_1}) \omega_{g_2,|I_2|+1}(\sigma_i(z),z_{I_2}) \Big)\\\nonumber
    &+\delta_{n,1}\sum_{j}\Res_{z\to a_j}\left(\int_{a_j}^z\omega_{0,2}(z_1,.)\right)dx(z)[\hbar^{2g}]\left(\frac{y_{a_j}}{\mathcal{S}(y_{a_j}^{-1}\hbar \partial_x)}\ln(z-a_j) \right),
\end{align}
where $y_j=\Res_{z\to a_j}dy$. As follows from the definition of Log-TR, at a ramification point of $x$ the definition coincides with that of the original TR, whereas at the points $\{a_j\}$ the evaluation is different, but is in fact given explicitly.

For the spectral curve \eqref{logTRpara}, we have more explicitly, with $\mathcal{P}=\{q_i\}\cup \{a_j\}=\emptyset\cup \{0\}$,
\begin{align*}
    \omega_{g,1}(z)=dz[\hbar^{2g}]\left(\frac{1}{\mathcal{S}(\hbar \partial_z)}\ln(z) \right)=-\frac{(2g-1)!dz}{z^{2g}}[\hbar^{2g}]\frac{1}{\mathcal{S}(\hbar)},
\end{align*}
and $\omega_{g,n}=\delta_{n,2}\delta_{g,0}B$ for $n>1$.

As the origin of Log-TR comes from the $x$-$y$ duality, the statement is that the dual curve $(\Sigma,dy,dx,\mathcal{P}^\vee,B)$ is also governed by Log-TR, which implies that $\mathcal{P}^\vee$ consists of the ramification points of $y$ and the points where $dx$ has a residue and $dy$ is regular. For the curve \eqref{logTRpara}, this implies that all dual differentials $\omega_{g,n}^\vee$ vanish for $2g-2+n>0$, since $\mathcal{P}^\vee=\emptyset$.

\subsection{Exponential ramification point with trivial dual side}\label{sec.exampleez}
Let us consider the following example of a spectral curve $(\Sigma=\mathbb{P}^1,dx=e^zdz,dy=dz,B=\frac{dz_1\, dz_2}{(z_1-z_2)^2},\mathcal{P}=\{\infty\})$. Most importantly, note that the relation between $x=e^z$ and $y=z$ is the same as for the curve \eqref{logTRpara},
\begin{align*}
    e^y-x=0.
\end{align*}
However, the main difference comes from the choice of the Bergman kernel, or rather from the choice of the global coordinate $z$ (projection). This in turn determines where the special points are located. In the previous example, there was only one special point at the origin, whereas now the special point is located at infinity.

One might argue that this choice of coordinate is simply a poor choice and that the formalism of Gen-TR is not applicable. However, this curve can be approximated by the sequence of curves
\begin{align}\label{Crcurve}
    C_r:=(\Sigma=\mathbb{P}^1,dx_r=(1+\frac{z}{r})^{r-1}dz,dy=dz,B=\frac{dz_1\, dz_2}{(z_1-z_2)^2},\mathcal{P}=\{-r\}).
\end{align}
This is a perfectly well-defined spectral curve with a single ramification point of order $r$ at $z=-r$. For the spectral curve $C_r$, Def.~\ref{Gen-TRDef} applies directly and computes the same differentials as the higher-order TR defined by Bouchard and Eynard \cite{Bouchard:2012yg}, as follows from \cite[Thm.~5.6]{Alexandrov:2024tjo}. Let $\omega_{g,n}^r$ denote the corresponding differentials associated with the spectral curve $C_r$, computed either by Gen-TR or, equivalently, by BE-TR. Then the limit $\lim_{r\to \infty}\omega_{g,n}^r$ converges to the differentials with essential singularities defined in our main result Thm.~\ref{main} and Lemma~\ref{contourdef}.

To our understanding, this was in fact the original motivation behind the article \cite{Bouchard:2023yau}, where a limit of BE-TR with infinite-order ramification points was studied, but only for specific curves for which the resulting differentials remain meromorphic, since $\omega_{0,1}$ is required to be meromorphic there; see \cite[Def.~3.6]{Bouchard:2023yau}. Since the spectral curve \eqref{Crcurve} does not satisfy this condition in the limit $r\to\infty$, the approach of \textit{op.~cit.} is not applicable in this case, whereas our approach remains well-defined.

Since we have several ways of computing $\omega_{g,n}^r$, either via Gen-TR, BE-TR, or the $x$-$y$ duality, we can choose our preferred one. Let us observe that the dual curve
\[
C_r^\vee:=(\Sigma=\mathbb{P}^1,dy=dz,dx_r=(1+\frac{z}{r})^{r-1}dz,B=\frac{dz_1\, dz_2}{(z_1-z_2)^2},\mathcal{P}^\vee=\emptyset)
\]
is trivial in the sense that all stable $(\omega_{g,n}^r)^\vee=0$. Thus, the $x$-$y$ duality formula of Cor.~\ref{xytrivialdual} reads
\label{xyformulatrivialdual}
\begin{align}\label{xyfiniter}
    \omega_{g,n}^r=&[\hbar^{2g+n-2}]\prod_{i=1}^n\sum_{m_i\geq 0}\big(-d_i\frac{1}{d(x_r)_i}\big)^{m_i}[u_i^{m_i}](-dy_i)\frac{e^{\frac{1}{\hbar}\int_{y_i-\hbar u_i/2}^{y_i+\hbar u_i/2}(x_r)_idy_i-u_i(x_r)_i}}{u_i\hbar}\\\nonumber
    &\times \sum_{\sigma\in \, \mathrm{Cycle}(n) }\prod_{j=1}^n\frac{u_ju_{\sigma(j)}\hbar ^2}{(y_j-y_{\sigma(j)})^2-\frac{\hbar^2}{4}(u_j+u_{\sigma(j)})^2},
\end{align}
with $(x_r)_i=x_r(z_i)=(1+\frac{z_i}{r})^{r}$ and $y_i=z_i$.

For instance, it is easy to derive
\begin{align*}
    \omega^r_{0,3}(z_1,z_2,z_3)=&d_1d_2d_3\bigg[\frac{1}{(z_1-z_2)(z_1-z_3)x_r'(z_1)}+\frac{1}{(z_2-z_1)(z_2-z_3)x_r'(z_2)}\\
    &+\frac{1}{(z_3-z_1)(z_3-z_2)x_r'(z_3)}\bigg],\\
    \omega_{1,1}^r(z)=&-\frac{(r-1) (r+1)}{24 (r+z)^3 x'_r(z)}dz,\\
    \omega_{2,1}^r(z)=&\frac{(r-3) (r-1) (r+1) (r+3) (2 r+1) (2 r+3)}{640 r^3 (r+z)^4x_r(z)^3}.
\end{align*}
Further differentials become cumbersome very quickly. It is easy to see that the differentials $\omega_{g,n}^r$ have a well-defined limit as $r\to\infty$. The same holds for the formula \eqref{xyfiniter}. The limit $r\to\infty$ is
\begin{align}\label{omgnxezyz}
    \omega^\infty_{g,n}(I)=&[\hbar^{2g+n-2}]\prod_{i=1}^n\sum_{m_i\geq 0}\big(-d_i\frac{1}{dx_i}\big)^{m_i}[u_i^{m_i}](-dy_i)\frac{e^{\frac{e^{y_i+\hbar u_i/2}-e^{y_i-\hbar u_i/2}}{\hbar}-u_ie^{y_i}}}{u_i\hbar}\\
    &\times \sum_{\sigma\in \,\mathrm{Cycle}(n) }\prod_{j=1}^n\frac{u_ju_{\sigma(j)}\hbar ^2}{(y_j-y_{\sigma(j)})^2-\frac{\hbar^2}{4}(u_j+u_{\sigma(j)})^2},
\end{align}
with $x_i=e^{y_i}$.

Interestingly, it turns out that all $\omega^\infty_{g,1}$ vanish in the limit except for $\omega^\infty_{0,1}$. This is due to the following fact:
\begin{align*}
    \omega^\infty_{g,1}(z)=&[\hbar^{2g-1}]\sum_{m\geq 0}\big(-d\frac{1}{dx}\big)^{m}[u^{m}](-dy)\frac{e^{\frac{e^{y+\hbar u/2}-e^{y-\hbar u/2}}{\hbar}-ue^{y}}}{u\hbar}\\
    =&\sum_{m\geq 0}\big(-d\frac{1}{dx}\big)^{m}[u^{m}](-dy)\sum_{k=2g}^{3g-1}u^ka_{g,k}e^{y(k-2g)}
    =0.
\end{align*}
We have expanded the exponential and extracted the coefficient $[\hbar^{2g-1}]$, where $a_{g,k}$ are certain coefficients independent of $x$. Replacing $e^{y(k-2g)}dy=x^{k-2g}dy$ in the second-to-last line, we see that the action of $d\frac{1}{dx}$ reduces the exponent by one, since $dx=x\,dy$. If we now act $m=k$ times, the result vanishes for $g>0$.

However, the differentials $\omega^\infty_{g,n}$ with $n>1$ do not vanish in general. For instance,
\begin{align*}
    \omega_{0,3}^\infty(z_1,z_2,z_3)=&d_1d_2d_3\bigg[\frac{e^{-z_1}}{(z_1-z_2)(z_1-z_3)}+\frac{e^{-z_2}}{(z_2-z_1)(z_2-z_3)}+\frac{e^{-z_3}}{(z_3-z_1)(z_3-z_2)}\bigg],
\end{align*}
which is regular along the diagonal.\\

We can now compare the differentials of Sec.~\ref{sec:logTR} and \eqref{omgnxezyz}, which are both generated by Gen-TR for the same relation between $x$ and $y$, but with Bergman kernels corresponding to two different global coordinates that are not related by a holomorphic transformation. In both cases, the dual side is trivial. For the differentials of Sec.~\ref{sec:logTR}, we find that all $\omega_{g,1}$ are non-vanishing, whereas for the spectral curve considered here the situation is rather the opposite: all $\omega^\infty_{g,1}$ vanish, but for $n>1$ the differentials are non-vanishing in general. Furthermore, the $\omega^\infty_{g,n}$ are generated as the limit of a convergent sequence of rational spectral curves.

\subsection{Riemann $\zeta$-function as spectral curve}\label{sec.riemannzeta}
All examples in the realm of TR or its extensions are constructed from some local behavior of the two algebraic functions $x,y$, or rather the two meromorphic forms $dx,dy$. In the previous example, we have seen that these forms can actually have exponential or essential singular points, giving rise to exponential or essential singularities for all $\omega_{g,n}$. The corresponding global curve can be realized as the vanishing locus of a polynomial in the variables $e^x,e^y$ and/or $x,y$. These examples are still within the known class of curves appearing as mirror curves, Hurwitz numbers, or other familiar examples with possible different projections. However, in principle, one can take $x$ (or $dx$) to be any function (or form) on the complex plane and set $y=z$. Thus, our new approach is not necessarily restricted to meromorphic forms or ones with exponential singularities. Take, for instance, the following example:
\begin{align}\label{zetaspec}
    x=\zeta(z),\qquad y=z,\qquad\Rightarrow \quad x=\zeta(y),
\end{align}
where $\zeta$ denotes the Riemann $\zeta$-function, a transcendental function defined by
\begin{align*}
    \zeta(z)=\sum_{n\geq 1}\frac{1}{n^z},\qquad \text{for $\mathrm{Re}(z)>1$}.
\end{align*}
It admits a unique meromorphic continuation to the whole complex plane. Thus, this function is defined on the entire complex plane and has an essential singularity at infinity. Our approach provides a way to generate the corresponding $\omega_{g,n}$. A natural question is whether, by taking $x$ to be the Riemann $\zeta$-function, some non-trivial number-theoretic information is encoded in the differentials $\omega_{g,n}$.

The Bergman kernel is the standard one on the sphere,
\[
B=\frac{dz_1\,dz_2}{(z_1-z_2)^2}.
\]
The corresponding Gen-TR spectral curve is
\begin{align*}
    (\mathbb{P}^1,d\zeta(z),dz,B,\mathcal{P}),
\end{align*}
with $\mathcal{P}=\{\text{special points}\}$. The set of special points includes infinity and all zeros of $\zeta'(z)$, since $r\geq 2$ and $s=1$ in Def.~\ref{def:specialpoint}. The dual curve has the empty set as its set of dual key-points, which in turn produces vanishing dual correlators $\omega_{g,n}^\vee=0$ for $2g-2+n>0$. One can compute $\omega_{g,n}$ either by the formula of Lemma~\ref{contourdef} or by the $x$-$y$ duality formula with trivial dual side of Cor.~\ref{xytrivialdual}. For instance, one finds
\begin{align*}
    \omega_{0,3}(z_1,z_2,z_3)=&d_1d_2d_3\bigg[\frac{\frac{1}{(z_1-z_2) \zeta '(z_2)}-\frac{1}{(z_1-z_3) \zeta '(z_3)}}{z_3-z_2}+\frac{1}{(z_1-z_2) (z_1-z_3) \zeta '(z_1)}\bigg],\\
     \omega_{1,1}(z)=&d_z\bigg[\frac{\zeta ''(z)^2}{24 \zeta '(z)^3}-\frac{\zeta ^{(3)}(z)}{24 \zeta '(z)^2}\bigg],\\
     \omega_{2,1}(z)=&d_z\bigg[-\frac{\zeta ^{(7)}(z)}{1920 \zeta '(z)^4}+\frac{11 \zeta ^{(4)}(z)^2}{1920 \zeta '(z)^5}-\frac{5 \zeta ^{(3)}(z)^3}{288 \zeta '(z)^6}+\frac{35 \zeta ''(z)^6}{384 \zeta '(z)^9}+\frac{7 \zeta ^{(6)}(z) \zeta ''(z)}{1440 \zeta '(z)^5}\\
     &+\frac{13 \zeta ^{(3)}(z) \zeta ^{(5)}(z)}{1440 \zeta '(z)^5}-\frac{5 \zeta ^{(5)}(z) \zeta ''(z)^2}{192 \zeta '(z)^6}+\frac{19 \zeta ^{(4)}(z) \zeta ''(z)^3}{192 \zeta '(z)^7}-\frac{35 \zeta ^{(3)}(z) \zeta ''(z)^4}{128 \zeta '(z)^8}\\
     &+\frac{55 \zeta ^{(3)}(z)^2 \zeta ''(z)^2}{288 \zeta '(z)^7}-\frac{\zeta ^{(3)}(z) \zeta ^{(4)}(z) \zeta ''(z)}{12 \zeta '(z)^6}\bigg].
\end{align*}
All differentials $\omega_{g,n}$ are symmetric, regular along the diagonal, and have poles only at the points in $\mathcal{P}$, namely the zeros of $\zeta'(z)$ and infinity.

At present, it remains unclear whether these differentials possess any further special properties or are of any number-theoretical interest.

\subsection{The Mirzakhani curve}
One of the most important examples and applications of TR correlators is the so-called Mirzakhani curve. Mirzakhani derived a recursive formula for the Weil--Petersson volumes in \cite{MR2264808}, which was proved in \cite{Eynard:2007fi} to be equivalent to the original EO-TR for a specifically chosen spectral curve. After a straightforward rescaling, this spectral curve can be parametrized by
\begin{align}
    x= z^2 , \qquad y= \mathrm{i}\sin(z), \qquad  \omega_{0,2}=B=\frac{dz_1\,dz_2}{(z_1-z_2)^2}.
\end{align}
In the context of Gen-TR, this corresponds to the spectral curve
\begin{align}\label{mirzakhanicurve}
    (\mathbb{P}^1,dz^2,d\, (\mathrm{i}\sin z),B,\{0\}).
\end{align}
Some examples of these differentials are
\begin{align}\label{diffmirza}
    \omega_{0,3}(z)&=d_1d_2d_3\bigg[\frac{1}{2z_1z_2z_3}\bigg]\\\nonumber
    \omega_{1,1}(z)&=d_z\bigg[\frac{1}{48 z^3}-\frac{1}{96 z}\bigg]\\\nonumber
    \omega_{1,2}(z_1,z_2)&=d_1d_2\bigg[-\frac{1}{96 z_1^3 z_2}+\frac{1}{32 z_1^5 z_2}-\frac{1}{96 z_1 z_2^3}+\frac{1}{96 z_1^3 z_2^3}+\frac{1}{32 z_1 z_2^5}+\frac{1}{256 z_1 z_2}\bigg]\\\nonumber
    \omega_{2,1}(z)&=d_z\bigg[\frac{35}{3072 z^9}-\frac{29}{6144 z^7}+\frac{139}{122880 z^5}-\frac{169}{737280 z^3}+\frac{29}{393216 z}\bigg].
\end{align}
Large collections of these differentials can be found in the literature, together with important connections to integrable systems and physics, such as string theory, SYK models, and JT gravity; see, for instance, \cite{Stanford:2019vob}.

As these differentials are of great importance in different areas, one can ask about the $x$-$y$ dual differentials, which carry exactly the same amount of information but rearranged in a different way. The $x$-$y$ dual curve to \eqref{mirzakhanicurve} reads
\begin{align}
    (\mathbb{P}^1,d\sin z,dz^2,B,\{\tfrac{(2k+1)\pi}{2},\infty\})
\end{align}
for all $k\in \mathbb{Z}$. This means that the corresponding differentials have, on the one hand, infinitely many finite-order poles at $\{\frac{(2k+1)\pi}{2}\}_{k\in \mathbb{Z}}$ and, on the other hand, an essential singularity at $\infty$. This, however, seems to increase the complexity in comparison with the original differentials \eqref{diffmirza}.

A much better approach from the $x$-$y$ dual perspective is to take the following symplectic transformation into account:
\[
d\,(\mathrm{i}\sin z)\to d(\cos(z)+ \mathrm{i}\sin z)=de^z.
\]
The transformation yields the following new spectral curve
\begin{align}\label{mirzakhanicurvee}
    (\mathbb{P}^1,dz^2,d\, e^z,B,\{0\}),
\end{align}
which generates exactly the same differentials $\omega_{g,n}$ as \eqref{mirzakhanicurve}. This is not obvious from the Gen-TR perspective, but since both curves are related in the original TR, the transformation $y\to y+\cos(z)$ is simply a shift of $y$ by a function of $x=z^2$. Looking at Definition \eqref{EOTRtdom}, only the difference $y(z)-y(\sigma_i(z))$ is affected by this transformation. However, this difference is actually invariant because the Galois involution is $\sigma(z)=-z$. Thus,
\[
y(z)+\cos(z)-y(\sigma(z))-\cos(\sigma(z))
=
y(z)-y(\sigma(z)).
\]
The benefit of working with \eqref{mirzakhanicurvee} is that the set of special points is reduced from infinitely many points to just two points. Thus, the dual curve to \eqref{mirzakhanicurvee} reads
\begin{align}\label{mirzakhanicurveedual}
    (\mathbb{P}^1,d\, e^z,dz^2,B,\{\infty\}).
\end{align}
Let us denote the corresponding differentials by $\omega_{g,n}^\vee$, which are related to the Mirzakhani differentials \eqref{diffmirza} via the $x$-$y$ duality formula \eqref{transformationdual}. Let us list some examples:
\begin{align*}
    \omega_{0,3}^\vee(z_1,z_2,z_3)&=\frac{1}{2} d_1d_2d_3\bigg[-\frac{1}{z_1 z_2 z_3}+\frac{e^{-z_2}}{z_2 (z_2-z_1) (z_2-z_3)}\\
    &+\frac{e^{-z_3}}{z_3 (z_3-z_1) (z_3-z_2)}+\frac{e^{-z_1}}{z_1 (z_1-z_2) (z_1-z_3)}\bigg]\\
    \omega_{1,1}^\vee(z)&=d_z\bigg[\frac{e^{-z}}{48 z^3}-\frac{1}{48 z^3}+\frac{e^{-z}}{48 z^2}+\frac{1}{96 z}\bigg]\\
    \omega_{2,1}^\vee(z)&=d_z\bigg[-\frac{21 e^{-2 z}}{64 z^9}+\frac{91 e^{-3 z}}{384 z^9}+\frac{105 e^{-z}}{1024 z^9}-\frac{35}{3072 z^9}-\frac{35 e^{-2 z}}{512 z^8}+\frac{231 e^{-3 z}}{1024 z^8}\\
    &+\frac{131 e^{-3 z}}{1536 z^7}
    +\frac{7 e^{-2 z}}{1536 z^7}-\frac{49 e^{-z}}{3072 z^7}+\frac{29}{6144 z^7}+\frac{5 e^{-3 z}}{384 z^6}+\frac{11 e^{-2 z}}{3072 z^6}+\frac{e^{-2 z}}{2304 z^5}\\
    &-\frac{e^{-3 z}}{2880 z^5}
    +\frac{37 e^{-z}}{36864 z^5}
    -\frac{139}{122880 z^5}-\frac{e^{-3 z}}{3840 z^4}+\frac{169}{737280 z^3}-\frac{29}{393216 z}\bigg].
\end{align*}

We want to emphasize again that all these differentials are regular on the diagonal and at the points in $\mathcal{P}=\{0\}$, even though they are represented as having poles at those points. The reason is that the exponential cannot be brought into a partial fraction decomposition, as is usually done in the original EO-TR with meromorphic forms. The expansion around $z=0$ is, for instance,
\begin{align*}
    \omega_{1,1}^\vee(z)&=\bigg(-\frac{1}{384}+\frac{z}{720}-\frac{z^2}{2304}+\frac{z^3}{10080}+\mathcal{O}(z^4)\bigg)dz\\
    \omega_{2,1}^\vee(z)&=\bigg(\frac{6683}{1238630400}+\frac{31583 z}{10218700800}-\frac{2935 z^2}{2179989504}-\frac{859961 z^3}{398529331200}+\mathcal{O}(z^4)\bigg)dz.
\end{align*}
It is still mysterious what kind of enumerative interpretation these differentials have, as they are related to the Weil--Petersson volumes but wrapped in a non-trivial way.

\section{Further directions} \label{Sec4}
There remain many open questions concerning the geometric and enumerative significance of spectral curves with essential singularities. It is currently unclear whether spectral curves with essential singularities arise naturally from existing enumerative theories or whether they define entirely new classes of invariants. Thus, let us give some possible further direction:
\begin{itemize}
    \item \textit{Enumerative geometry}: The passage from $\psi$-class intersection numbers to their higher $r$-spin counterparts \cite{Chidambaram:2022cqc} led to the extension of EO-TR to BE-TR. Since both formalisms are encompassed by Gen-TR, it is natural to investigate the limit $r\to\infty$. Although the corresponding spectral curves do not converge directly, an appropriate rescaling by $r$ yields the curve discussed in Sec.~\ref{sec.exampleez}. This raises the question of whether this curve admits an enumerative interpretation as a limit of rescaled $r$-spin theories, as a kind of "$\infty$-spin" intersection numbers. Similarly, replacing $y$ by $1/y$ often leads to the $\Theta$-class analogue \cite{Norbury:2017eih}, suggesting further possible directions.
    
    On the other hand, when writing a spectral curve  with exponential singularities as a vanishing locus naturally involves exponential functions. Such functions are known to appear for Hurwitz numbers \cite{Bouchard2008,Bychkov:2020yzy} and in mirror curves associated with the Gromov--Witten theory of toric Calabi--Yau threefolds \cite{Bouchard:2007ys} or of $\mathbb{P}^1$ \cite{Norbury2014}. It is therefore natural to ask whether spectral curves with exponential singularities give rise to new enumerative theories, new classes of intersection numbers, new types of Hurwitz numbers, or new connections between topological recursion and existing enumerative problems.

  \item \textit{Matrix models}: Since topological recursion originated in the study of matrix models and loop equations, it is natural to ask whether spectral curves with essential singularities admit a matrix model realization. If so, what probabilistic or statistical-mechanical interpretation should such models possess? It is also conceivable that they arise as scaling limits of more classical matrix models, in analogy with the limit discussed in this article.

  \item \textit{Quantum spectral curves}: When $x$ and $y$ are meromorphic functions on a compact Riemann surface, the spectral curve can often be represented by an algebraic equation
    	$P(x,y)=0.$
This viewpoint motivated the development of quantization schemes based on topological recursion, in which the classical spectral curve is promoted to a quantum spectral curve, namely a differential or difference operator $\hat{P}$ annihilating a wave function $\psi$ constructed from the multi-differentials \cite{Bouchard:2016obz,Iwaki:2019zeq,Eynard:2021sxg,Banerjee:2025shz} $ \hat{P}\,\psi=0.$

To our knowledge, spectral curves with essential singularities of the type considered in this article have not been studied from the perspective of quantum spectral curves. Nevertheless, they seem to be of similar nature as mirror curves, for which quantization schemes are understood through spectral theory and the TS/ST correspondence \cite{Grassi:2014zfa,Francois:2025wwd}. It is therefore natural to investigate a quantization of/by Gen-TR and, for instance, to understand how essential singularities behave. Such a theory could potentially provide a bridge between Gen-TR and other existing quantization schemes.

    \item \textit{Integrable systems}: Topological recursion is naturally related to a variety of integrable systems, depending on the underlying spectral curve \cite{Iwaki:2019zeq,Marchal:2019bia}. More generally, both TR and Gen-TR give rise to KP tau-functions \cite{Alexandrov:2024hgu,Alexandrov:2025sap,Alexandrov:2024zku}. It is therefore natural to ask whether spectral curves admitting essential singularities are related to known integrable systems beyond the classical examples, or whether they correspond to new reductions, extensions, or classes of solutions of existing integrable hierarchies. Understanding this connection could provide further insight into the role of essential singularities within the broader landscape of integrable systems.

    \item \textit{Transcendental multi-differentials}: As the example of Sec.~\ref{sec.riemannzeta} illustrates, Gen-TR can in principle be applied to spectral curves associated with transcendental functions such as the Riemann $\zeta$-function. The example considered there was included primarily as a curiosity. Nevertheless, it demonstrates that the constructions remain well defined and naturally produce an infinite family of multi-differentials associated with transcendental functions.

It is therefore natural to ask whether these constructions have a deeper mathematical meaning and, in particular, whether they can provide a genuinely interesting connection between topological recursion and number theory. For instance, it would be interesting to understand whether arithmetic properties of transcendental functions can be reflected in the resulting multi-differentials and associated partition functions.

\end{itemize}

\section*{Conflict of Interest Statement}
On behalf of all authors, the corresponding author states that there is no conflict of interest.

\bibliographystyle{halpha-abbrv}
\bibliography{omega.bib}

@article{Bouchard:2012yg,
    author = "Bouchard, Vincent and Eynard, Bertrand",
    title = "{Think globally, compute locally}",
    eprint = "1211.2302",
    archivePrefix = "arXiv",
    primaryClass = "math-ph",
    reportNumber = "IPHT-T12-114",
    doi = "10.1007/JHEP02(2013)143",
    journal = "JHEP",
    volume = "02",
    pages = "143",
    year = "2013"
}

@article{Eynard:2007kz,
    author = "Eynard, Bertrand and Orantin, Nicolas",
    title = "{Invariants of algebraic curves and topological expansion}",
    eprint = "math-ph/0702045",
    archivePrefix = "arXiv",
    reportNumber = "SPHT-07-021",
    doi = "10.4310/CNTP.2007.v1.n2.a4",
    journal = "Commun. Num. Theor. Phys.",
    volume = "1",
    pages = "347--452",
    year = "2007"
}

@article{Borot:2021thu,
    author = {Borot, Ga{\"e}tan and Charbonnier, S{\'e}verin and Garcia-Failde, Elba and Leid, Felix and Shadrin, Sergey},
    title = "{Functional relations for higher-order free cumulants}",
    eprint = "2112.12184",
    archivePrefix = "arXiv",
    primaryClass = "math.OA",
    month = "12",
    year = "2021"
}

@article{Hock:2022wer,
    author = "Hock, Alexander",
    title = "{On the $x$--$y$ Symmetry of Correlators in Topological Recursion via Loop Insertion Operator}",
    eprint = "2201.05357",
    archivePrefix = "arXiv",
    primaryClass = "math-ph",
    doi = "10.1007/s00220-024-05043-1",
    journal = "Commun. Math. Phys.",
    volume = "405",
    number = "7",
    pages = "166",
    year = "2024"
}

@article{Alexandrov:2022ydc,
    author = "A. Alexandrov and B. Bychkov and P. Dunin-Barkowski and M. Kazarian and S. Shadrin",
    title = "{A universal formula for the $x-y$ swap in topological recursion}",
    eprint = "2212.00320",
    archivePrefix = "arXiv",
    primaryClass = "math-ph",
    doi = "10.4171/JEMS/1615",
    year = "2025",
    Journal = {J. Eur. Math. Soc.}
}

@article{Hock:2022pbw,
    author = "Hock, Alexander",
    title = "{A simple formula for the $x$--$y$ symplectic transformation in topological recursion}",
    eprint = "2211.08917",
    archivePrefix = "arXiv",
    primaryClass = "math-ph",
    doi = "10.1016/j.geomphys.2023.105027",
    journal = "J. Geom. Phys.",
    volume = "194",
    pages = "105027",
    year = "2023"
}

@article{Alexandrov:2023tgl,
    author = "Alexandrov, Alexander and Bychkov, Boris and Dunin-Barkowski, Petr and Kazarian, Maxim and Shadrin, Sergey",
    title = "{Log Topological Recursion Through the Prism of $x$--$y$ Swap}",
    eprint = "2312.16950",
    archivePrefix = "arXiv",
    primaryClass = "math-ph",
    doi = "10.1093/imrn/rnae213",
    journal = "Int. Math. Res. Not.",
    volume = "2024",
    number = "21",
    pages = "13461--13487",
    year = "2024"
}

@article{Alexandrov:2024hgu,
    author = "Alexandrov, Alexander and Bychkov, Boris and Dunin-Barkowski, Petr and Kazarian, Maxim and Shadrin, Sergey",
    title = "{Any Topological Recursion on a Rational Spectral Curve is KP Integrable}",
    eprint = "2406.07391",
    archivePrefix = "arXiv",
    primaryClass = "math-ph",
    doi = "10.1007/s00220-026-05566-9",
    journal = "Commun. Math. Phys.",
    volume = "407",
    number = "4",
    pages = "69",
    year = "2026"
}

@article{Alexandrov:2024tjo,
    author = "Alexandrov, Alexander and Bychkov, Boris and Dunin-Barkowski, Petr and Kazarian, Maxim and Shadrin, Sergey",
    title = "{Degenerate and Irregular Topological Recursion}",
    eprint = "2408.02608",
    archivePrefix = "arXiv",
    primaryClass = "math-ph",
    reportNumber = "MPIM-Bonn-2024",
    doi = "10.1007/s00220-025-05274-w",
    journal = "Commun. Math. Phys.",
    volume = "406",
    number = "5",
    pages = "94",
    year = "2025"
}

@article{Alexandrov:2024zku,
    author = "Alexandrov, Alexander and Bychkov, Boris and Dunin-Barkowski, Petr and Kazarian, Maxim and Shadrin, Sergey",
    title = "{KP integrability of non-perturbative differentials}",
    eprint = "2412.18592",
    archivePrefix = "arXiv",
    primaryClass = "math-ph",
    month = "12",
    year = "2024"
}

@article{Alexandrov:2025sap,
    author = "Alexandrov, Alexander and Bychkov, Boris and Dunin-Barkowski, Petr and Kazarian, Maxim and Shadrin, Sergey",
    title = "{Blobbed topological recursion and KP integrability}",
    eprint = "2505.03545",
    archivePrefix = "arXiv",
    primaryClass = "math-ph",
    doi = "10.1007/s00029-026-01135-z",
    journal = "Selecta Math.",
    volume = "32",
    number = "2",
    pages = "25",
    year = "2026"
}

@article{Bouchard:2025rid,
    author = "Bouchard, Vincent and Chidambaram, Nitin K. and Giacchetto, Alessandro and Shadrin, Sergey",
    title = "{Theta classes: generalized topological recursion, integrability and $\mathcal{W}$-constraints}",
    eprint = "2505.11291",
    archivePrefix = "arXiv",
    primaryClass = "math.AG",
    month = "5",
    year = "2025"
}

@article{Hock:2023dno,
    author = "Hock, Alexander",
    title = "{$x$--$y$ duality in topological recursion for exponential variables via quantum dilogarithm}",
    eprint = "2311.11761",
    archivePrefix = "arXiv",
    primaryClass = "math-ph",
    doi = "10.21468/SciPostPhys.17.2.065",
    journal = "SciPost Phys.",
    volume = "17",
    number = "2",
    pages = "065",
    year = "2024"
}

@article{Eynard:2007fi,
    author = "Eynard, Bertrand and Orantin, Nicolas",
    title = "{Weil-Petersson volume of moduli spaces, Mirzakhani's recursion and matrix models}",
    eprint = "0705.3600",
    archivePrefix = "arXiv",
    primaryClass = "math-ph",
    reportNumber = "SPHT-T07-065",
    month = "5",
    year = "2007"
}

@article{Stanford:2019vob,
    author = "Stanford, Douglas and Witten, Edward",
    title = "{JT gravity and the ensembles of random matrix theory}",
    eprint = "1907.03363",
    archivePrefix = "arXiv",
    primaryClass = "hep-th",
    doi = "10.4310/ATMP.2020.v24.n6.a4",
    journal = "Adv. Theor. Math. Phys.",
    volume = "24",
    number = "6",
    pages = "1475--1680",
    year = "2020"
}

@article {MR2264808,
    AUTHOR = {Mirzakhani, Maryam},
     TITLE = {Simple geodesics and {W}eil-{P}etersson volumes of moduli
              spaces of bordered {R}iemann surfaces},
   JOURNAL = {Invent. Math.},
  FJOURNAL = {Inventiones Mathematicae},
    VOLUME = {167},
      YEAR = {2007},
    NUMBER = {1},
     PAGES = {179--222},
      ISSN = {0020-9910,1432-1297},
   MRCLASS = {32G15 (14H15)},
  MRNUMBER = {2264808},
MRREVIEWER = {Hsian-Hua\ Tseng},
       DOI = {10.1007/s00222-006-0013-2},
       URL = {https://doi.org/10.1007/s00222-006-0013-2},
}

@article{Bouchard:2007ys,
    author = "Bouchard, Vincent and Klemm, Albrecht and Mari{\~{n}}o, Marcos and Pasquetti, Sara",
    title = "{Remodeling the B-model}",
    eprint = "0709.1453",
    archivePrefix = "arXiv",
    primaryClass = "hep-th",
    reportNumber = "BONN-TH-2007-07, CERN-PH-TH-2007-153, NEIP-07-03",
    doi = "10.1007/s00220-008-0620-4",
    journal = "Commun. Math. Phys.",
    volume = "287",
    pages = "117--178",
    year = "2009"
}

@article{Borot:2023wik,
    author = {Borot, Ga{\"e}tan and Bouchard, Vincent and Chidambaram, Nitin Kumar and Kramer, Reinier and Shadrin, Sergey},
    title = "{Taking limits in topological recursion}",
    eprint = "2309.01654",
    archivePrefix = "arXiv",
    primaryClass = "math.AG",
    doi = "10.1112/jlms.70286",
    journal = "J. Lond. Math. Soc.",
    volume = "112",
    number = "3",
    pages = "e70286",
    year = "2025"
}

@article{Eynard:2012nj,
    author = "Eynard, Bertrand and Orantin, Nicolas",
    title = "{Computation of Open Gromov{\textendash}Witten Invariants for Toric Calabi{\textendash}Yau 3-Folds by Topological Recursion, a Proof of the BKMP Conjecture}",
    eprint = "1205.1103",
    archivePrefix = "arXiv",
    primaryClass = "math-ph",
    reportNumber = "IPHT-T12-030",
    doi = "10.1007/s00220-015-2361-5",
    journal = "Commun. Math. Phys.",
    volume = "337",
    number = "2",
    pages = "483--567",
    year = "2015"
}

@article{Iwaki:2019zeq,
    author = "Iwaki, Kohei",
    title = "{2-Parameter $\tau $-Function for the First Painlev{\'e} Equation: Topological Recursion and Direct Monodromy Problem via Exact WKB Analysis}",
    eprint = "1902.06439",
    archivePrefix = "arXiv",
    primaryClass = "math-ph",
    doi = "10.1007/s00220-020-03769-2",
    journal = "Commun. Math. Phys.",
    volume = "377",
    number = "2",
    pages = "1047--1098",
    year = "2020"
}

@article{Do:2014ncn,
    author = "Do, Norman and Norbury, Paul",
    title = "{Topological recursion for irregular spectral curves}",
    eprint = "1412.8334",
    archivePrefix = "arXiv",
    primaryClass = "math.GT",
    doi = "10.1112/jlms.12112",
    journal = "J. Lond. Math. Soc.",
    volume = "97",
    number = "3",
    pages = "398--426",
    year = "2018"
}

@article{Dubrovin:1994hc,
    author = "Dubrovin, Boris",
    editor = "Francaviglia, M. and Greco, S.",
    title = "{Geometry of 2-D topological field theories}",
    eprint = "hep-th/9407018",
    archivePrefix = "arXiv",
    reportNumber = "SISSA-89-94-FM",
    doi = "10.1007/BFb0094793",
    journal = "Lect. Notes Math.",
    volume = "1620",
    pages = "120--348",
    year = "1996"
}

@article{Marchal:2019bia,
    author = "Marchal, Olivier and Orantin, Nicolas",
    title = "{Quantization of hyper-elliptic curves from isomonodromic systems and topological recursion}",
    eprint = "1911.07739",
    archivePrefix = "arXiv",
    primaryClass = "math-ph",
    doi = "10.1016/j.geomphys.2021.104407",
    journal = "J. Geom. Phys.",
    volume = "171",
    pages = "104407",
    year = "2022"
}

@article{Eynard:2021sxg,
    author = "Eynard, Bertrand and Garcia-Failde, Elba and Marchal, Olivier and Orantin, Nicolas",
    title = "{Quantization of Classical Spectral Curves via Topological Recursion}",
    eprint = "2106.04339",
    archivePrefix = "arXiv",
    primaryClass = "math-ph",
    doi = "10.1007/s00220-024-04997-6",
    journal = "Commun. Math. Phys.",
    volume = "405",
    number = "5",
    pages = "116",
    year = "2024"
}

@article{Bouchard:2016obz,
    author = "Bouchard, Vincent and Eynard, Bertrand",
    title = "{Reconstructing WKB from topological recursion}",
    eprint = "1606.04498",
    archivePrefix = "arXiv",
    primaryClass = "math-ph",
    reportNumber = "IPHT:T16-056, CRM-3354(2016)",
    doi = "10.5802/jep.58",
     JOURNAL = {J. \'Ec. polytech. Math.},
  FJOURNAL = {Journal de l'\'Ecole polytechnique. Math\'ematiques},
    VOLUME = {4},
      YEAR = {2017},
     PAGES = {845--908},
}

@article{Eynard:2011ga,
    author = "Eynard, B.",
    title = "{Invariants of spectral curves and intersection theory of moduli spaces of complex curves}",
    eprint = "1110.2949",
    archivePrefix = "arXiv",
    primaryClass = "math-ph",
    reportNumber = "IPHT-T11-200",
    doi = "10.4310/CNTP.2014.v8.n3.a4",
    journal = "Commun. Num. Theor. Phys.",
    volume = "8",
    pages = "541--588",
    year = "2014"
}

@article{Banerjee:2025shz,
    author = "Banerjee, Sibasish and Hock, Alexander",
    title = "{Quantum curve for strip geometries, topological recursion and open GW/DT invariants}",
    eprint = "2510.07146",
    archivePrefix = "arXiv",
    primaryClass = "math-ph",
    doi = "10.1007/s11005-026-02059-7",
    journal = "Lett. Math. Phys.",
    volume = "116",
    number = "1",
    pages = "22",
    year = "2026"
}

@article{Norbury:2017eih,
    author = "Norbury, Paul",
    title = "{A new cohomology class on the moduli space of curves}",
    eprint = "1712.03662",
    archivePrefix = "arXiv",
    primaryClass = "math.AG",
    doi = "10.2140/gt.2023.27.2695",
    journal = "Geom. Topol.",
    volume = "27",
    number = "7",
    pages = "2695--2761",
    year = "2023"
}

@article{Grassi:2014zfa,
    author = "Grassi, Alba and Hatsuda, Yasuyuki and Marino, Marcos",
    title = "{Topological Strings from Quantum Mechanics}",
    eprint = "1410.3382",
    archivePrefix = "arXiv",
    primaryClass = "hep-th",
    reportNumber = "DESY-14-181",
    doi = "10.1007/s00023-016-0479-4",
    journal = "Annales Henri Poincare",
    volume = "17",
    number = "11",
    pages = "3177--3235",
    year = "2016"
}

@article{Francois:2025wwd,
    author = "Fran{\c{c}}ois, Matijn and Grassi, Alba",
    title = "{On the Open TS/ST Correspondence}",
    eprint = "2503.21762",
    archivePrefix = "arXiv",
    primaryClass = "hep-th",
    reportNumber = "CERN-TH-2025-049",
    doi = "10.1007/s00220-026-05608-2",
    journal = "Commun. Math. Phys.",
    volume = "407",
    number = "7",
    pages = "146",
    year = "2026"
}

@article{Chidambaram:2022cqc,
    author = "Chidambaram, Nitin Kumar and Garcia-Failde, Elba and Giacchetto, Alessandro",
    title = "{Relations on {\textbackslash}overline{{\textbackslash}mathcal{M}}{\_}{g,n} and the negative r-spin Witten conjecture}",
    eprint = "2205.15621",
    archivePrefix = "arXiv",
    primaryClass = "math.AG",
    reportNumber = "MPIM-Bonn-2022",
    doi = "10.1007/s00222-025-01351-y",
    journal = "Invent. Math.",
    volume = "241",
    number = "3",
    pages = "929--997",
    year = "2025"
}

@article{Dunin-Barkowski:2019iog,
    author = "Dunin-Barkowski, Petr and Kramer, Reinier and Popolitov, Alexandr and Shadrin, Sergey",
    title = "{Loop equations and a proof of Zvonkine's $qr$-ELSV formula}",
    eprint = "1905.04524",
    archivePrefix = "arXiv",
    primaryClass = "math.AG",
    doi = "10.24033/asens.2553",
     JOURNAL = {Ann. Sci. \'Ec. Norm. Sup\'er. (4)},
  FJOURNAL = {Annales Scientifiques de l'\'Ecole Normale Sup\'erieure.
              Quatri\`eme S\'erie},
    VOLUME = {56},
      YEAR = {2023},
    NUMBER = {4},
     PAGES = {1199--1229},
}

@article{Bychkov:2020ujd,
    author = "Bychkov, Boris and Dunin-Barkowski, Petr and Kazarian, Maxim and Shadrin, Sergey",
    title = "{Explicit closed algebraic formulas for Orlov-Scherbin $n$-point functions}",
    eprint = "2008.13123",
    archivePrefix = "arXiv",
    primaryClass = "math.CO",
    doi = "10.5802/jep.202",
    month = "8",
    year = "2020"
}

@article{Bychkov:2020yzy,
    author = "Bychkov, Boris and Dunin-Barkowski, Petr and Kazarian, Maxim and Shadrin, Sergey",
    title = "{Topological recursion for Kadomtsev{\textendash}Petviashvili tau functions of hypergeometric type}",
    eprint = "2012.14723",
    archivePrefix = "arXiv",
    primaryClass = "math-ph",
    doi = "10.1112/jlms.12946",
    journal = "J. Lond. Math. Soc.",
    volume = "109",
    number = "6",
    pages = "e12946",
    year = "2024"
}

@article{Hock:2026qrp,
    author = "Hock, Alexander and Marchal, Olivier and Orantin, Nicolas",
    title = "{Geometry of Logarithmic Topological Recursion: Dilaton Equations, Free Energies and Variational Formulas}",
    eprint = "2604.25622",
    archivePrefix = "arXiv",
    primaryClass = "math-ph",
    month = "4",
    year = "2026"
}

@article{Fang2019,
  title = {{On the remodeling conjecture for toric Calabi-Yau 3-orbifolds}},
  volume = {33},
  ISSN = {0894-0347},
  url = {http://dx.doi.org/10.1090/jams/934},
  DOI = {10.1090/jams/934},
  number = {1},
  journal = {J. Amer. Math. Soc},
  publisher = {American Mathematical Society (AMS)},
  author = {Fang,  B. and Liu,  C. and Zong,  Z.},
  year = {2019},
  pages = {135–222}
}

@article{Krichever:1992qe,
    author = "Krichever, I. M.",
    editor = "Khalatnikov, I. M. and Mineev, V. P.",
    title = "{The tau function of the universal Whitham hierarchy, matrix models and topological field theories}",
    eprint = "hep-th/9205110",
    archivePrefix = "arXiv",
    reportNumber = "LPTENS-92-18",
    journal = "Commun. Pure Appl. Math.",
    volume = "47",
    pages = "437",
    year = "1994"
}

@article{Bouchard:2023yau,
    author = "Bouchard, Vincent and Kramer, Reinier and Weller, Quinten",
    title = "{Topological recursion on transalgebraic spectral curves and Atlantes Hurwitz numbers}",
    eprint = "2304.07433",
    archivePrefix = "arXiv",
    primaryClass = "math-ph",
    doi = "10.1016/j.geomphys.2024.105306",
    journal = "J. Geom. Phys.",
    volume = "206",
    pages = "105306",
    year = "2024"
}

@article{Chekhov2019,
  title = {Topological recursion with hard edges},
  volume = {30},
  ISSN = {1793-6519},
  url = {http://dx.doi.org/10.1142/S0129167X19500149},
  DOI = {10.1142/s0129167x19500149},
  number = {03},
  journal = {Internat. J. Math.},
  publisher = {World Scientific Pub Co Pte Ltd},
  author = {Chekhov,  L. and Norbury,  P.},
  year = {2019},
  pages = {1950014}
}

@article{Norbury2014,
  title = {{Gromov–Witten invariants of $\mathbb{P}^1$ and Eynard–Orantin invariants}},
  volume = {18},
  ISSN = {1465-3060},
  url = {http://dx.doi.org/10.2140/gt.2014.18.1865},
  DOI = {10.2140/gt.2014.18.1865},
  number = {4},
  journal = {Geom. Top.},
  publisher = {Mathematical Sciences Publishers},
  author = {Norbury,  P. and Scott,  N.},
  year = {2014},
  pages = {1865–1910}
}

@article{Eynard2011Hurwitz,
  title = {{The Laplace Transform of the Cut-and-Join Equation and the Bouchard–Mariño Conjecture on Hurwitz Numbers}},
  volume = {47},
  ISSN = {1663-4926},
  url = {http://dx.doi.org/10.2977/PRIMS/47},
  DOI = {10.2977/prims/47},
  number = {2},
  journal = {Publ. Res. Inst. Math. Sci.},
  publisher = {European Mathematical Society - EMS - Publishing House GmbH},
  author = {Eynard,  B. and Mulase,  M. and Safnuk,  B.},
  year = {2011},
  pages = {629–670}
}

@article{Bertola:2005he,
    author = "Bertola, M.",
    title = "{Two-matrix model with semiclassical potentials and extended Whitham hierarchy}",
    eprint = "hep-th/0511295",
    archivePrefix = "arXiv",
    doi = "10.1088/0305-4470/39/28/S05",
    journal = "J. Phys. A",
    volume = "39",
    pages = "8823--8856",
    year = "2006"
}

@incollection {Bouchard2008,
    AUTHOR = {Bouchard, Vincent and Mari\~no, Marcos},
     TITLE = {Hurwitz numbers, matrix models and enumerative geometry},
 BOOKTITLE = {From {H}odge theory to integrability and {TQFT} tt*-geometry},
    SERIES = {Proc. Sympos. Pure Math.},
    VOLUME = {78},
     PAGES = {263--283},
 PUBLISHER = {Amer. Math. Soc., Providence, RI},
      YEAR = {2008},
      ISBN = {978-0-8218-4430-4},
   MRCLASS = {14N10 (14N35 81T30 81T45)},
  MRNUMBER = {2483754},
MRREVIEWER = {Yunfeng\ Jiang},
       DOI = {10.1090/pspum/078/2483754},
       URL = {https://doi.org/10.1090/pspum/078/2483754},
}

@article{DuninBarkowski2014,
  title = {{Identification of the Givental Formula with the Spectral Curve Topological Recursion Procedure}},
  volume = {328},
  ISSN = {1432-0916},
  url = {http://dx.doi.org/10.1007/s00220-014-1887-2},
  DOI = {10.1007/s00220-014-1887-2},
  number = {2},
  journal = {Comm Math. Phys.},
  publisher = {Springer Science and Business Media LLC},
  author = {Dunin-Barkowski,  P. and Orantin,  N. and Shadrin,  S. and Spitz,  L.},
  year = {2014},
  pages = {669–700}
}

@article{KPandTaufunctions,
author = {Bychkov, B. and Dunin-Barkowski, P. and Kazarian, M. and Shadrin, S.},
title = {{Topological recursion for Kadomtsev–Petviashvili tau functions of hypergeometric type}},
journal = {J. London Math. Soc.},
volume = {109},
number = {6},
pages = {e12946},
doi = {https://doi.org/10.1112/jlms.12946},
year = {2024}
}

\end{document}